\newcommand{\Rey}{\mathrm{Re}}
\newcommand{\Gr}{\mathrm{Gr}}
\newcommand{\Rangle}{\right\rangle}
\newcommand{\Langle}{\left\langle}
\newcommand{\Sum}[1]{\sum_{#1=1}^\infty}
\newcommand{\phexp}{{\vphantom{*}}} %phantom exponent
\newcommand{\bel}{\begin{equation}\label}
\newcommand{\ee}{\end{equation}}
\newcommand{\beq}{\begin{eqnarray}\label}
\newcommand{\eeq}{\end{eqnarray}}
\newtheorem{theorem}{Theorem}%[section]
\newtheorem{lemma}{Lemma}%[section]
\newtheorem{remark}{Remark}%[section]
\begin{document}

\title{\bfseries\large\color{blue} How to extract a spectrum from hydrodynamic equations}

\author{John D. Gibbon$^1$\thanks{E-mail: \texttt{j.d.gibbon@ic.ac.uk}}
and Dario Vincenzi$^2$\thanks{E-mail: \texttt{dario.vincenzi@univ-cotedazur.fr}.
Also Associate, International Centre for Theoretical Sciences,
Tata Institute of Fundamental Research, Bangalore 560089, India}
\\
$^1${\it\normalsize Department of Mathematics, Imperial College London SW7 2AZ, UK}
\\
$^2${\it\normalsize Universit\'e C\^ote d'Azur, CNRS, LJAD, 06100 Nice, France}
}
\maketitle
\vspace{-10mm}
\begin{center}
{\em\color{blue}A paper dedicated to the memory of Charles Doering (1956--2021)}
\end{center}

\begin{abstract}
Practical results gained from statistical theories of turbulence usually appear in the form of an inertial range energy spectrum $\mathcal{E}(k)\sim k^{-q}$ and a cut-off wave-number $k_{c}$. For example, the values $q=5/3$ and $\ell k_{c}\sim \Rey^{3/4}$ are intimately associated with Kolmogorov's 1941 theory. To extract such spectral information from the Navier--Stokes equations, \cite{dg02} introduced the idea of forming a set of dynamic wave-numbers $\kappa_n(t)$ from ratios of norms of solutions. The time averages of the $\kappa_n(t)$ can be interpreted as the 2$n$th-moments of the energy spectrum. They found that $1 < q \leqslant 8/3$, thereby confirming the earlier work of \cite{sf75} who showed that when spatial intermittency is included, no inertial range can exist in the limit of vanishing viscosity unless $q \leqslant 8/3$. Since the $\kappa_n(t)$ are based on Navier--Stokes weak solutions, this approach connects empirical predictions of the energy spectrum with the mathematical analysis of the Navier--Stokes equations. This method is developed to show how it can be applied to many hydrodynamic models such as the two dimensional Navier--Stokes equations (in both the direct- and inverse-cascade regimes), the forced Burgers equation and shell models.
\end{abstract}
 
%%%%%%%%%%%%%%%%%%%%%%%
%\pacs{...}

\vspace{-3mm}
\section{\large\color{blue}Introduction}\label{sect:introduction}

The energy spectrum of the velocity field plays an important role in fluid dynamics, since it describes how kinetic energy distributes across scales.
In turbulent flows, the energy spectrum generally behaves as a power law in the range between the forcing and dissipation characteristic wavenumbers, with a slope that depends critically on the space dimension.  In view of their highly fluctuating nature, turbulent flows have been studied with statistical tools, and the form of the energy spectrum has been predicted by using dimensional analysis, renormalization-group techniques, and stochastic or closure models. For a recent review of this topic, the reader is referred to \citet{ab18}
and \citet{verma2019}.
%\citealp{ab18} and \citealp{verma2019}).
\par\smallskip
Establishing a rigorous connection between the statistical theory of turbulence and the mathematical analysis of the Navier--Stokes equations
is a difficult problem [\citealp{dg95}; \citealp{foias2001}; \citealp{constantin2006}; \citealp{doering2009}; Kuksin and Shirikyan,
\citeyear{ks12}; \citealp{bt13}]. Let us first summarize how empirical estimates for length scales in the statistical theory of homogeneous and isotropic turbulence have traditionally been obtained in terms of the energy spectrum. In a $d$-dimensional space, this is defined as
\bel{Edef}
\mathcal{E}(k)= c_d\, k^{d-1}\operatorname{Tr}\mathbb{F}(k), 
\ee
where $c_d$ is a positive constant which depends on the spatial dimension and 
\bel{Fdef}
\mathbb{F}_{ij}(k)=\int_{\mathbb{R}^d} e^{-i\bm k\cdot\bm r} \; \overline{\bm u(\bm x+\bm r,t)\cdot\bm u(\bm x,t)}\; dV_{r}
\ee
is the Fourier transform of the velocity spatial correlation function \citep{my75}. The overline denotes an ensemble average over the realizations of the velocity field in the statistically steady state. For a statistically stationary, homogeneous, and isotropic field, the spatial correlation does not depend on time and the position $\bm x$, but only on the separation $\bm r$. For $d=3$ assume that $\mathcal{E}(k)$ has an inertial range between the forcing wavenumber $\ell^{-1}$ and a cut-off wavenumber $k_{c}$ 
of the form
\bel{Eqdef}
\mathcal{E}(k)\sim \epsilon^{2/3}\ell^{q-5/3} k^{-q}\qquad (1<q<3)\,,
\ee
where 
\begin{equation}
\epsilon = \nu \int_{0}^{\infty} k^2 \mathcal{E}(k)dk %\sim \nu A k_{c}^{3-q}
\end{equation}
is the mean energy dissipation rate. By using \eqref{Eqdef} and ignoring the energy content in the range $k>k_c$, the mean energy dissipation rate can be estimated as $\epsilon^{1/3}\sim\nu\ell^{5/3-q}k_c^{3-q}$. This, together with the empirical prediction $\epsilon\sim U^3/\ell$  yields
\bel{kcdef}
\ell k_{c}\sim \Rey^{\frac{1}{3-q}}\,,
\ee
which can be found in \citet{frisch1995}. Here $U$ is the root-mean square velocity and $\Rey=U\ell/\nu$ is the Reynolds number.
The $2n$-th moment of the energy spectrum, i.e.
\bel{eq:K}
K_n^{2n}=\dfrac{\int_0^\infty k^{2n} \mathcal{E}(k)\,dk}{\int_0^\infty \mathcal{E}(k)\,dk} \qquad (n\geqslant 1)\,,
\ee
is then estimated as
\bel{Knqest}
\ell K_n \sim (\ell k_{c})^{1-\frac{q-1}{2n}} \sim \Rey^{\frac{1}{3-q}-\frac{1}{2n}\left(\frac{q-1}{3-q}\right)}\,.
\ee
Kolmogorov's 1941 theory sets $q$ to 5/3, which gives 
\bel{kcest}
\ell k_{c}\sim\Rey^{3/4}\qquad\mbox{and}\qquad\ell K_n\sim\Rey^{3/4-1/4n}\,.
\ee
\par\vspace{0mm}
%%%%%%% Start of Table %%%%%%
{\small
%\begin{sidewaystable}
\begin{table}
\caption{\scriptsize Estimates for the time average of $\mathcal{L}\langle\kappa_n\rangle$ and corresponding predictions for the inertial-range energy spectrum. $\mathcal{L}$ is the box size $L$ for $d=2$ and the forcing length scale ($\ell$) in all the other cases. Unless otherwise specified, $q > 1$.}\label{table}
\medskip\centering
\renewcommand{\arraystretch}{2}
\begin{tabular}{|c|c|c|}                                                                                                 
\hline
System & Upper bounds on $\mathcal{L}\langle\kappa_n\rangle$ & $\mathcal{E}(k)\sim k^{-q}$\\\hline\hline
3D Navier--Stokes & $a_{\ell}^{3-\frac{7}{2n}}\,\Rey^{3-\frac{5}{2n}+\frac{\delta}{n}}$ & $q\leqslant \frac{8}{3}$
\\
\hline
\makecell{3D Navier--Stokes \\ with suppressed fluctuations}
& $a_{\ell}^{\frac{3(n-1)(p-2)}{n(p+6)}-\frac{1}{2n}}\, \Rey^{\frac{6np-5p+6}{2n(p+6)}}$ & $q\leqslant 
\frac{8}{3} - \frac{2}{p}$
\\
\hline
2D Navier--Stokes (direct cascade)  & $a_{\ell}^{\frac{3}{2}\left(1-\frac{1}{n}\right)}\, \Rey^{\frac{3}{4}-\frac{1}{2n}}$ & 
$q\leqslant \frac{11}{3}$   
\\
\hline
\makecell{2D Navier--Stokes (direct cascade) with\\
monochromatic or constant $\epsilon$-forcing}
& $a_{\ell}^{\frac{3}{2}\left(1-\frac{1}{n}\right)}\, \Rey^{\frac{1}{2}}$ & $q\leqslant 3$   
\\
\hline
2D Navier--Stokes (inverse cascade) & $a_\ell^{n/2}\, \Rey^{\frac{1}{2}}$ & $\frac{5}{3}\leqslant q$   
\\
\hline
Burgers & $a_{\ell}^{\frac{1}{3}-\frac{5}{6n}}\,\Rey^{1-\frac{1}{2n}}$ & $q\leqslant 2$
\\
\hline
Shell model & $a_\ell^{-\frac{1}{2n}}\,\Rey^{\frac{3}{4}-\frac{1}{4n}}$ & $q\leqslant \frac{5}{3}$ 
\\
\hline
\end{tabular}
%\end{sidewaystable}
\end{table}%%%End of the table 
}
\par\smallskip\noindent
How can a result like (\ref{Knqest}) be achieved for the incompressible Navier--Stokes equations? More specifically, how can the value of $q$ be determined from the analysis? Rigorous results for partial differential equations are conventionally expressed as estimates of time-averages of spatial norms and not in terms of spectra. Indeed, in the language of Sobolev norms the idea of a spectrum associated with an inertial range, as in (\ref{Eqdef}), has no meaning. How to circumvent this difficulty and extract results corresponding to (\ref{Knqest}) for weak solutions of the three-dimensional Navier--Stokes equations was first addressed by \cite{dg02} twenty years ago. Moreover, in a separate but parallel paper, \citet{df02} also addressed how length scales in the forcing can be used to achieve estimates in terms of the Reynolds number $\Rey$ instead of the less physical Grashof number $\Gr$\,: see (\ref{GRdef}) in \S\ref{sect:NS} for definitions of these dimensionless quantities. 
\par\smallskip
A summary of these ideas is the following\,: first write down the Navier--Stokes equations on a periodic $d$-dimensional domain $\mathcal{V}=[0,L]^d$, where $d=2,\,3$
\bel{NSE1}
\partial_{t}\bm u +\bm u\cdot\nabla\bm u = -\nabla p+\nu\Delta\bm u + \bm f(\bm x)\,, \qquad \nabla\cdot\bm u = 0\,.
\ee
Here $\bm u(\bm x,t)$ is the velocity field, $p$ is pressure, $\nu$ is the kinematic viscosity, and $\bm f(\bm x)$ is a time-independent, mean-zero, and divergence-free body forcing. For simplicity, we follow \citet{df02} in assuming that the forcing is narrow-band, i.e.~it is concentrated on wavenumbers $k\sim\ell^{-1}$. Parseval's equality then implies that $\Vert\bm f\Vert_2\approx \ell^{n}\Vert\nabla^n\bm f\Vert_2$, where $\Vert\cdot\Vert_2^2=\int_{\mathcal{V}} \vert\cdot\vert^2\, dV$. The aspect ratio of the box size to the forcing scale is denoted as
\bel{aelldef}
a_{\ell} = \frac{L}{\ell}.
\ee
As a consequence of Poincar\'e's inequality, $a_\ell\geqslant 2\pi$. The initial velocity field is taken mean-zero, so that $\bm u(\bm x,t)$ remains mean-zero at all times. In \citet{dg02} the following sequence of squared $L^2$-norms was introduced 
\bel{Fndef}
F_n(t)= H_n(t) +\tau_{n}^{2}\,\Vert\nabla^n\bm f\Vert_2^2, \qquad n=1,2,\dots
\ee
with 
\bel{Hndef}
H_n(t)=\Vert\nabla^n\bm u(\cdot,t)\Vert_2^2\,.
\ee
The forcing term is included for the technical reason that the analysis involves division by $F_n$ and thus $H_n$ may be small on certain time intervals. The time scales\footnote{In \citet{dg02} the $\tau_{n}$ had no $n$-dependence. However, the development of the method to other cases sometimes requires this dependence so it has been introduced at this point.} $\tau_n$ are chosen in such a way that the contribution of the forcing does not dominate the time average of $H_n$ in the turbulent regime so the $\Rey$-scaling of the time averages of $F_n$ and $H_n$ remains the same. These technical issues are addressed in \S\ref{sect:NS}. Then the following family of time-dependent ratios was introduced
\bel{eq:ratios}
\kappa_{n,r}(t)=\left(\frac{F_n}{F_r}\right)^\frac{1}{2(n-r)} \qquad (0\leqslant r<n)\,.
\ee
The $\kappa_{n,r}$ have the dimension of a wavenumber and are ordered according to $\kappa_{n,r}\leqslant\kappa_{n+1,r}$ and $\kappa_{n,r}\leqslant\kappa_{n,r+1}$. The quantities $\kappa_n\equiv\kappa_{n,0}$ play a special role because of their physical meaning. Indeed, Parseval's equality yields
\bel{Hndef-Fourier}
H_n(t) = L^{-d}\sum_{\bm k} k^{2n}\vert \hat{\bm u}(\bm k,\,t)\vert^{2}
\ee
with $\hat{\bm u}(\bm k,t)$ as the inverse spatial Fourier transform of ${\bm u}(\bm x,\,t)$. Hence
\bel{kndef}
\kappa_{n}^{2n}(t) =\dfrac{\sum_{\bm k} k^{2n}\left(\vert \hat{\bm u}(\bm k,t)\vert^{2}+\tau_{n}^{2}\,\vert \hat{\bm f}(\bm k)\vert^{2}\right)}{\sum_{\bm k}\left(\vert \hat{\bm u}(\bm k,t)\vert^{2}+\tau_{n}^{2}\,\vert \hat{\bm f}(\bm k)\vert^{2}\right)}\,.
%\qquad \text{as $\Gr\to\infty$}.
\ee
At large Reynolds numbers, $\kappa_{n}^{2n}(t)$ can therefore be regarded as the $2n$-th moment of the (instantaneous) energy spectrum.
The strategy in \citet{dg02}, which also adopted some ideas on the forcing from \citet{df02}, was to find a class of estimates of the type 
\bel{ta1} 
\langle\ell\kappa_{n}\rangle \leqslant c_{n}\Rey^{\xi_{n}}\,,
\ee
for the set of time averages $\langle\ell\kappa_{n}\rangle$, where the brackets 
\bel{tadef}
\Langle\cdot\Rangle=\limsup_{T\to\infty}\frac{1}{T}\int_0^T \cdot\; \mathrm{d}t
\ee
denote a long-time average.  The specific form of $\xi_{n}$ found in \citet{dg02} is given in Theorem \ref{th:dg02} and is also displayed in the first line of Table \ref{table}.  The estimate in (\ref{ta1}) in terms of $\Rey$ then allowed them to make the final step which was to compare the exponent $\xi_{n}$ with that in (\ref{Knqest}) that comes from \citet{frisch1995}
\bel{exp1}
\frac{1}{3-q}-\frac{1}{2n}\left(\frac{q-1}{3-q}\right) \leqslant \xi_{n}\,.
\ee
The direction of the inequality in (\ref{exp1}) reflects that in (\ref{ta1}). In reality, results from statistically stationery, homogeneous, isotropic turbulence theory are being compared with estimates of the long-time averages of ratios of Navier-Stokes spatial norms. The value of $\xi_{n}$ from Theorem \ref{th:dg02} gives the range of $q$ and this turns out to be precisely 
\bel{qrange}
1 < q \leqslant 8/3
\ee
as in \citet{sf75}, where a bound on the energy spectrum was obtained by considering a `shell decomposition' of the velocity field and examining the energy flux across wavenumbers. 
\par\smallskip%\noindent
Here we will endeavour to show that this method has much greater scope and can be applied in other circumstances, such as the $2D$ Navier--Stokes equations (in both the direct- and inverse-cascade regimes), Burgers equation, and shell models. Table \ref{table} summarises the range of $q$ for each of these cases with the details provided in the rest of the paper. Although the spectral slopes for these systems are known, our study shows that they can be obtained in a systematic way within the same mathematical framework and thus confirms the wide applicability of these methods to the analysis of hydrodynamic equations. 

%%%%%%%%%%%%%

%%%%%%%%%%%%%%%%%%%%

\section{\large\color{blue}The Navier--Stokes equations in three and two dimensions}\label{sect:NS}

In the following, we consider the Navier--Stokes equations in both $d=3$ and $d=2$ dimensions. For weak solutions with initial data in $L^2(\mathcal{V})$, the root-mean square velocity 
\bel{Udef}
U=L^{-d/2}\sqrt{\Langle\Vert\bm u\Vert_2^2\Rangle} < \infty\,.
\ee
Likewise, the root-mean square of the (time-independent) forcing is $f=L^{-d/2}\Vert\bm f\Vert_2$. Suitable definitions of the Grashof and Reynolds numbers are 
\bel{GRdef}
\Gr=\frac{f\ell^3}{\nu^2}\,,\qquad\quad
\Rey=\frac{U\ell}{\nu}\,.
\ee
The former is a dimensionless measure of the magnitude of the forcing, whereas the latter is the system response. $\Gr$ and $\Rey$ satisfy the bound 
\citep{df02}
\bel{eq:Gr-Re}
\Gr\leqslant c(\Rey+\Rey^2)\,,
\ee
which shows that the turbulent regime is achieved for $\Gr\gg 1$. The bound in \eqref{eq:Gr-Re} can be rewritten in terms of the mean energy dissipation rate
\bel{epsilondef}
\epsilon=\nu L^{-d}\Langle H_1\Rangle
\ee
as
\bel{epbd1}
\epsilon \leqslant c\,\nu^3\ell^{-4}\left(\Rey^2+\Rey^3\right)\,.
\ee
\citet{df02} also proved the inequality
\bel{lower-eps}
c_1 f \leqslant c_2 \nu^{1/2}\ell^{-1}\epsilon^{1/2}+c_3 \nu^{-1/2} U \epsilon^{1/2}\,,
\ee
which, in turn, gives the lower bound
\bel{lower-eps-2}
\epsilon \geqslant c\,\nu^3\ell^{-4}\,\frac{\Gr^2}{(1+\Rey)^2}\,.
\ee
The $L^2$-norms $F_n$ include a contribution from the forcing which must not dominate $\langle H_n\rangle$ as $\Gr\to\infty$. This is achieved by suitably choosing the time scales $\tau_n$. Using Poincar\'e's inequality, the assumption of a narrow-band forcing, and \eqref{lower-eps-2} yields
\begin{align}
\label{eq:ratio}
\frac{\tau_n^2 \Vert\nabla^n\bm f\Vert_2^2}{\langle H_n\rangle}
&\leqslant c_n \, L^{2(n-1)} \frac{\tau_n^2 \Vert\nabla^n\bm f\Vert_2^2}{\langle H_1\rangle} 
= c_n\, \nu \epsilon^{-1} L^{2(n-1)}\ell^{-2n}\tau_n^2 f^2
\\
&= c_n\, \nu^5 \epsilon^{-1}L^{2(n-1)}\ell^{-2(n+3)}\tau_n^2\Gr^2
\leqslant c_n\, \nu^2 \ell^{-4}a_\ell^{2(n-1)}\tau_n^2(1+\Rey)^2\,.
\end{align}
Therefore, a suitable definition of $\tau_n$ is
\bel{eq:tau1}
\tau_n = \frac{\nu^{-1}\ell^{2}}{a_\ell^{(n-1)}(1+\Rey)^{1+2\delta}}
\ee
with 
\bel{deltadef1}
0 < \delta < \frac{1}{6}\quad\mbox{for}\quad d=3\qquad\mbox{and}\qquad \delta=0\quad\mbox{for}\quad d=2\,,
\ee
so that 
\bel{deltadef2}
\tau_{n}^{2} \Vert\nabla^n\bm f\Vert_2^2 \leqslant c_n \Rey^{-4\delta}\,\langle H_n\rangle
\qquad\mbox{as}\qquad\qquad\Gr\to\infty\,.
\ee
The non-zero $\delta$-correction is required when $d=3$ because, for technical reasons, the forcing contribution to $\langle F_n\rangle$ needs to become negligible as $\Gr\to\infty$ \citep{dg02}. When $d=2$, the contribution of the forcing simply must not grow faster than $\langle H_n\rangle$ \citep{gp07}. We shall see that the use of definition \eqref{eq:tau1} systematically improves the power of $a_\ell$ in the estimates of $\langle\kappa_n\rangle$. Although this is of little importance in most cases because generally $a_\ell=O(1)$, it becomes essential in the study of the inverse-cascade regime of the $2D$ Navier--Stokes equations, which is characterized by large values of $a_\ell$.
%{\color{red}The factor of a non-zero $\delta$ is a technical necessity. It can either be chosen %very small or, alternatively, $\tau$ could be %chosen in the form
%\bel{eq:tau2}
%%\tau=\ell^2\nu^{-1}\Gr^{-1/2-\delta} 
%\tau=\ell^2\nu^{-1}(\Gr\ln\Gr)^{-1/2}\,.
%%\tau=\ell^2\nu^{-1}\Gr^{1/2}(1+\ln\Gr)^{-1/2}
%\ee}
%
%In particular, with the above definition of $\tau$, as $\Gr\to\infty$
%\begin{equation}
%\langle F_1\rangle = \nu^{-1}L^d\epsilon + \tau^2 L^d \ell^{-2}f^2 \sim \nu^{-1}L^d\epsilon\,,
%\end{equation}
%whence
%\bel{eq:F1}
%\langle F_1\rangle \leqslant c\, \nu^2L^d\ell^{-4}\Rey^3\,.
%\ee
\par\smallskip%\noindent
Finally, when $d=2$ and $d=3$ the $F_n$ satisfy the following `ladder' of differential inequalities as $\Gr\to\infty$ (see qualifications in Appendix~\ref{appendix-A})\,:
\bel{eq:ladder}
\frac{1}{2}\dot{F}_n\leqslant -\nu F_{n+1}+c_n\left(\Vert\nabla\bm u\Vert_\infty\,+\tau_n^{-1}\right)F_n\,.
\ee
Now we shall see that estimates for $\langle\kappa_{n}\rangle$ differ, leading to different ranges of $q$. Throughout this paper $c$ and $c_{n}$ denote dimensionless, generic constants.

%%%%%%%%%%%%%%%

\subsection{\small\color{blue}Three examples involving the $3D$ Navier--Stokes equations}\label{3dNSE}

The main result of \citet{dg02} is an estimate for the time average of $\kappa_n$ for weak solutions of the $3D$ Navier-Stokes equations.%
\footnote{The difference in the power of $a_\ell$ compared to the original version of the theorem is due to the use of definition \eqref{eq:tau1}, which gives an extra factor $a_\ell^{-1}$ in the estimate of $\langle\kappa_1^2\rangle$\,: $\ell^2\langle\kappa_1^2\rangle\leqslant c\, a_\ell^{-1} \Rey^{1+2\delta}$ as $\Gr\to\infty$. The rest of the proof is unchanged.}
%The estimate expressed below for $\ell\langle\kappa_n\rangle$ has been made in the case when the very small $\delta$-term in the expression %for $\tau$ given in (\ref{eq:tau1}) has been neglected. 
%%%%%%%%%%%%%
\begin{theorem}[\cite{dg02}]\label{th:dg02}
For $n\geqslant 2$ and $0 < \delta < \frac{1}{6}$,
\bel{thm1}
\ell\langle\kappa_n\rangle\leqslant c_{n}\,a_{\ell}^{3-\frac{7}{2n}}\Rey^{3-\frac{5}{2n} +\frac{\delta}{n}}
\qquad\mbox{as}\qquad\Gr\to\infty\,.
\ee
\end{theorem}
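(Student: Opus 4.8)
The plan is to run the \citet{dg02} ladder argument at level $n$ and convert the resulting differential inequality into a long-time bound on $\langle\kappa_n^2\rangle$, from which \eqref{thm1} follows by Cauchy--Schwarz in time, $\langle\kappa_n\rangle\leqslant\langle\kappa_n^2\rangle^{1/2}$. First I would divide the ladder \eqref{eq:ladder} by $F_n$. The dissipative term becomes $-\nu F_{n+1}/F_n=-\nu\kappa_{n+1,n}^2$, and the ordering recalled after \eqref{eq:ratios} gives $\kappa_{n+1,n}\geqslant\kappa_{n+1,0}\geqslant\kappa_{n,0}=\kappa_n$, hence $F_{n+1}/F_n\geqslant\kappa_n^2$. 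The left-hand side becomes $\tfrac12\,\tfrac{d}{dt}\ln F_n$, whose average under \eqref{tadef} vanishes once $F_n$ is known to stay between two positive constants; this is precisely the purpose of the forcing regularisation $\tau_n^2\Vert\nabla^n\bm f\Vert_2^2$ built into \eqref{Fndef}. Taking the long-time average then yields the working inequality
\[
\nu\,\langle\kappa_n^2\rangle\leqslant c_n\,\langle\Vert\nabla\bm u\Vert_\infty\rangle+c_n\,\tau_n^{-1}.
\]

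The heart of the matter is the control of $\langle\Vert\nabla\bm u\Vert_\infty\rangle$ in three dimensions. Here I would use a Gagliardo--Nirenberg interpolation of the form $\Vert\nabla\bm u\Vert_\infty\leqslant c\,H_1^{(1-a)/2}H_{n+1}^{a/2}$ with $a=\tfrac{3}{2n}$; the constraint $a\leqslant1$ is exactly $n\geqslant2$, which is where the hypothesis on $n$ originates. Because this inevitably introduces the norm $H_{n+1}$ one rung above level $n$, the estimate cannot be closed in isolation: the natural move is to retain a fraction of the dissipation $-\nu F_{n+1}$ in the ladder and use $H_{n+1}\leqslant F_{n+1}$ together with Young's inequality to absorb the high-norm part of $\Vert\nabla\bm u\Vert_\infty F_n$, leaving a remainder expressed through $H_1$, $F_0$ and $F_n=F_0\kappa_n^{2n}$ alone. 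After dividing by $F_n$ and averaging, one is left with a self-referential inequality in which $\kappa_n$ appears on both sides.

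It then remains to feed in the physical input and solve. The factor $H_1$ is governed by the dissipation law $\epsilon=\nu L^{-d}\langle H_1\rangle$ together with the upper bound \eqref{epbd1}, which supplies the powers of $\Rey$; H\"older's inequality in time separates $\langle H_1^{\cdot}\,\kappa_n^{\cdot}\rangle$ into a power of $\langle H_1\rangle$ and a power of a $\kappa_n$-moment, the self-referential term being moved to the left once the exponents are balanced. The time scale $\tau_n$ from \eqref{eq:tau1} contributes the remaining $a_\ell$ factors and, through its $(1+\Rey)^{1+2\delta}$ dependence, the $\delta/n$ correction in the exponent; tracking all powers of $\nu$, $\ell$, $L$ and $F_0\sim L^dU^2$ and taking the square root should reproduce $\ell\langle\kappa_n\rangle\leqslant c_n\,a_\ell^{3-7/2n}\Rey^{3-5/2n+\delta/n}$.

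The main obstacle is this closure. Because the interpolation exponent $a=3/2n$ makes the $\kappa_n$-power on the right comparable to $2$ precisely in the low-$n$ range, the balancing of the Young and H\"older exponents is delicate for the smallest admissible values $n=2,3$, and keeping the $\Rey$ and $a_\ell$ powers sharp there is where the real work lies. Compounding this is a genuinely three-dimensional difficulty: $\Vert\nabla\bm u\Vert_\infty$ is not known to be finite pointwise in time for Leray--Hopf weak solutions, so every manipulation must be read in the averaged, weak-solution sense with the ladder holding in the qualified form of Appendix~\ref{appendix-A}, and the non-zero $\delta$ is exactly the device that renders the forcing contribution to $\langle F_n\rangle$ negligible as $\Gr\to\infty$, legitimising the averaged steps.
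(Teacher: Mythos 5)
Your opening move --- divide the ladder \eqref{eq:ladder} by $F_n$, time average, and control $\langle\Vert\nabla\bm u\Vert_\infty\rangle$ by a Gagliardo--Nirenberg interpolation with exponent $a=3/2n$ --- is indeed how \citet{dg02} begin, and your identification of $n\geqslant 2$ with the validity of that interpolation is correct. But the closure you propose does not work, and it bypasses the one device that actually produces the exponents in \eqref{thm1}. Carry your own plan out: writing $\Vert\nabla\bm u\Vert_\infty\leqslant c\,H_1^{(1-a)/2}H_{n+1}^{a/2}$, absorbing $H_{n+1}\leqslant F_{n+1}$ into half the dissipation by Young's inequality, and dividing by $F_n=F_0\kappa_n^{2n}$ leaves $\nu\langle\kappa_n^2\rangle\lesssim \nu^{-3/(4n-3)}F_0^{3/(4n-3)}\langle H_1^{(2n-3)/(4n-3)}\,\kappa_n^{6n/(4n-3)}\rangle+\nu^{-1}\tau_n^{-1}$. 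Any H\"older split of that time average which uses only $\langle H_1\rangle$ to the first power --- the only moment of $H_1$ available for weak solutions --- forces the conjugate factor to be exactly $\langle\kappa_n^3\rangle$, which cannot be absorbed into the left-hand side $\langle\kappa_n^2\rangle$ because Jensen runs the wrong way; splitting the other way requires $\langle H_1^p\rangle$ with $p>1$, which is equally unavailable. So the ``self-referential inequality'' never closes, and this is not a delicacy confined to $n=2,3$: it fails for every $n$.

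The missing idea is that the ladder estimate is \emph{not} where the $1/n$-corrections come from. The argument of \citet{dg02}, whose template this paper reproduces for Burgers and for shell models (see \eqref{eq:k2n} and \eqref{shprf1-b}, the latter introduced with ``the proof is analogous to that of Theorem 1 in \citet{dg02}''), establishes a bound on $\langle\kappa_{n,1}\rangle$ whose $\Rey$-exponent is \emph{uniform in $n$} (here $\ell\langle\kappa_{n,1}\rangle\leqslant c_n\,a_\ell^{3}\Rey^{3}$), proves \emph{separately}, from the energy balance rather than from the level-$n$ ladder, that $\ell^2\langle\kappa_1^2\rangle\leqslant c\,a_\ell^{-1}\Rey^{1+2\delta}$ (this is precisely the estimate quoted in the footnote to the theorem), and then combines the two through the elementary decomposition $\langle\kappa_n\rangle\leqslant\langle\kappa_{n,1}\rangle^{(n-1)/n}\langle\kappa_1^2\rangle^{1/2n}$. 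Every $1/n$ in \eqref{thm1} is generated by the exponent $1/2n$ on the $\kappa_1$ factor: $3\cdot\tfrac{n-1}{n}+\tfrac{1+2\delta}{2n}=3-\tfrac{5}{2n}+\tfrac{\delta}{n}$ for $\Rey$ and $3\cdot\tfrac{n-1}{n}-\tfrac{1}{2n}=3-\tfrac{7}{2n}$ for $a_\ell$. In particular, your attribution of the $\delta/n$ correction to the $\tau_n^{-1}$ term of the level-$n$ ladder is wrong: that term enters additively as $c\,\nu^{-1}\tau_n^{-1}\propto a_\ell^{n-1}(1+\Rey)^{1+2\delta}$, with no $1/n$ anywhere; the $\delta/n$ appears only because the full power $(1+\Rey)^{1+2\delta}$ sits inside $\langle\kappa_1^2\rangle$ and is then raised to $1/2n$. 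Replacing the decomposition by $\langle\kappa_n\rangle\leqslant\langle\kappa_n^2\rangle^{1/2}$, as you do, discards this structure and at best would return the $n$-independent exponent $\Rey^{3}$.
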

%%%%%%%%%%%%%
\begin{remark}
Comparing the exponents of $\Rey$ in (\ref{thm1}) and (\ref{Knqest}) gives
\bel{match1}
\frac{1-\frac{q-1}{2n}}{3-q} \leqslant 3-\frac{5}{2n} +\frac{\delta}{n}\,,
\ee
whence 
\bel{match2}
q \leqslant \frac{8}{3} + \frac{2\delta}{3(n-1)}\,.
\ee
Thus, for every value of $n$, we find that $1 < q \leqslant \frac{8}{3}$ as advertised in Table \ref{table}. 
\end{remark}
%\begin{remark}
%The result of Theorem \ref{th:dg02} has been given without the $\delta$-correction in (\ref{eq:tau1}). If this is included and the form of $\tau$ %given in (\ref{eq:tau2}) is used then (\ref{thm1}) has a multiplicative extra term of $(\ln\Rey)^{\frac{1}{4n}}$. This can be taken into account %by writing $\ln\Rey < 1+ \ln\Rey$ and then using the inequality $\ln(1+\ln\Rey)< \ln\Rey$, This adds a $1/4n$-term to the right hand side of %(\ref{match}) from which we find that the range narrows slightly to $1< q \leqslant 29/11$. 
%It is possible that the $\delta$-correction is an artefect of the method.
%\end{remark}
\par\smallskip\noindent
\citet{dg02} also investigated how the energy spectrum is modified when the spatial fluctuations of the velocity gradients are suppressed
through the assumption
\bel{eq:m}
\Vert\nabla\bm u\Vert_\infty \approx c\, L^{-3/p}\Vert\nabla\bm u\Vert_{p}\,, \qquad 2\leqslant p\leqslant \infty\,.
\ee
For $p=2$, this means that as $\Rey$ increases, the maximum velocity scales as the root-mean square velocity. Higher values of $p$ correspond to a milder suppression of fluctuations, and \eqref{thm1} is recovered for $p=\infty$. With approximation \eqref{eq:m}, Theorem~\ref{th:dg02} is modified as follows\footnote{The exponent $\delta$ can be set to zero when assumption~\eqref{eq:m} is used.}
\begin{theorem}[\cite{dg02}]%[\citet{dg02}]
%%%%%%%%%%%%
\label{th:dg02-m}
Under assumption \eqref{eq:m} and for $n\geqslant 2$
\bel{thm1eqn1}
\ell\langle\kappa_n\rangle\leqslant c_{n}\,a_{\ell}^{\frac{3(n-1)(p-2)}{n(p+6)}-\frac{1}{2n}}
\Rey^{\frac{6np-5p+6}{2n(p+6)}}\qquad\mbox{as}\qquad \Gr\to\infty\,.
\ee
\end{theorem}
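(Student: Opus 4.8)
The plan is to run the argument of Theorem~\ref{th:dg02} essentially unchanged, altering only the interpolation inequality that controls the velocity gradient. First I would reconstruct the skeleton of the $p=\infty$ proof. Starting from the ladder \eqref{eq:ladder}, for $n\geqslant 2$ one bounds $\Vert\nabla\bm u\Vert_\infty$ by a three-dimensional Gagliardo--Nirenberg (Agmon-type) inequality interpolating between $\Vert\bm u\Vert_2=H_0^{1/2}$ and $\Vert\nabla^{n+1}\bm u\Vert_2=H_{n+1}^{1/2}$, so that $\Vert\nabla\bm u\Vert_\infty\leqslant c\,H_{n+1}^{\theta/2}H_0^{(1-\theta)/2}$ with $\theta=\tfrac{5}{2(n+1)}$ (admissibility $\theta\leqslant 1$ is what forces $n\geqslant 2$). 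Inserting this into the rung-$n$ inequality and applying Young's inequality to the factor $F_{n+1}^{\theta/2}$ against the dissipative term $-\nu F_{n+1}$ removes the top rung and leaves a closed differential inequality for $F_n$. Dividing by $F_n$, using the ordering $\kappa_{n+1,n}\geqslant\kappa_n$ (equivalently $F_{n+1}/F_n\geqslant\kappa_n^2$), and taking the long-time average \eqref{tadef} (so that $\langle\tfrac{d}{dt}\ln F_n\rangle\leqslant 0$) produces a self-consistent bound for a moment of $\kappa_n$; this is closed using the base estimate $\ell^2\langle\kappa_1^2\rangle\leqslant c\,a_\ell^{-1}\Rey$ and the $\tau_n$ bound \eqref{eq:tau1}, with the power-mean inequality giving \eqref{thm1}.

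The single modification needed here is at the interpolation step. Under assumption \eqref{eq:m} I would replace $\Vert\nabla\bm u\Vert_\infty$ by $c\,L^{-3/p}\Vert\nabla\bm u\Vert_p$ and then apply the same Gagliardo--Nirenberg interpolation between $H_0$ and $H_{n+1}$ to $\Vert\nabla\bm u\Vert_p$ rather than to $\Vert\nabla\bm u\Vert_\infty$. Matching scaling dimensions in $d=3$ now gives
\[
\Vert\nabla\bm u\Vert_p\leqslant c\,H_{n+1}^{\theta_p/2}H_0^{(1-\theta_p)/2},\qquad \theta_p=\frac{5p-6}{2p(n+1)},
\]
so that $\Vert\nabla\bm u\Vert_\infty\leqslant c\,L^{-3/p}H_{n+1}^{\theta_p/2}H_0^{(1-\theta_p)/2}$. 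The only changes relative to the $p=\infty$ case are thus the replacement $\theta\mapsto\theta_p$ and the appearance of the prefactor $L^{-3/p}=\ell^{-3/p}a_\ell^{-3/p}$; at $p=\infty$ one has $\theta_p=\theta$ and the prefactor is $1$, recovering the skeleton above, and one checks $\theta_p\leqslant 1$ for all $p\in[2,\infty]$ and $n\geqslant 2$.

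I would then carry these two changes through the identical chain of steps. The modified Young-conjugate exponent $\tfrac{2}{2-\theta_p}$ propagates the $\Rey$- and $a_\ell$-powers of $\langle\kappa_1^2\rangle$ and of $\tau_n^{-1}$ into the final bound, while the extra factor $L^{-3/p}$ supplies the $p$-dependent part of the power of $a_\ell$; after the self-consistent closure is solved, the algebra should collapse to the exponents $\tfrac{6np-5p+6}{2n(p+6)}$ for $\Rey$ and $\tfrac{3(n-1)(p-2)}{n(p+6)}-\tfrac{1}{2n}$ for $a_\ell$ in \eqref{thm1eqn1}, the denominator $p+6$ being generated when the self-referential inequality for the relevant $\kappa_n$-moment is solved. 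I would also note that, because \eqref{eq:m} already ties the maximum gradient to a mean quantity, the forcing contribution to $\langle F_n\rangle$ stays subdominant without the $\delta$-correction, so $\delta=0$ may be taken throughout.

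The main obstacle is the exponent bookkeeping in this closure rather than any new analytic input: one must track precisely how $\theta_p$ and the $L^{-3/p}$ prefactor move through the Young step, through the H\"older splitting of the time-averaged product $\langle F_0^{\,\bullet}\kappa_n^{\,\bullet}\rangle$, and through the bound on the $F_0$-moments (controlled by the energy inequality), and then verify that the resulting self-consistent inequality indeed solves to give exactly the stated powers. A useful consistency test on the whole computation is that the $p=\infty$, $\delta=0$ specialization must reproduce \eqref{thm1}.
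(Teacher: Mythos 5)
Your overall plan --- keep the ladder \eqref{eq:ladder}, replace only the estimate for $\Vert\nabla\bm u\Vert_\infty$ using \eqref{eq:m}, close a self-consistent inequality, and finish with the H\"older step $\langle\kappa_n\rangle\leqslant\langle\kappa_{n,1}\rangle^{(n-1)/n}\langle\kappa_1^2\rangle^{1/2n}$ together with $\ell^2\langle\kappa_1^2\rangle\leqslant c\,a_\ell^{-1}\Rey$ --- is indeed the intended architecture, and your remark that $\delta=0$ suffices here is correct. The gap is in the one step you actually specify: the interpolation. Bounding $\Vert\nabla\bm u\Vert_p$ between $H_0$ and $H_{n+1}$ with $\theta_p=\tfrac{5p-6}{2p(n+1)}$ cannot deliver the theorem. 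First, after dividing the rung by $F_n$ the term you create is proportional to $F_0^{1/(2-\theta_p)}\kappa_n^{2n\theta_p/(2-\theta_p)}$, and $2n\theta_p/(2-\theta_p)<2$ only when $(n+1)\theta_p=\tfrac{5p-6}{2p}<2$, i.e.\ $p<6$; for $p>6$, and in particular for your $p=\infty$ consistency check, the H\"older/Young closure against the only controlled quantity $\langle\kappa^2\rangle$ fails (it would require super-unit time-moments of $F_0$ or $F_1$ that are not available for weak solutions). Second, even where it closes ($p<6$), all the $\Rey$-dependence is routed through $F_0\sim L^3U^2$ and the resulting bound is strictly weaker than \eqref{thm1eqn1}: at $p=2$ one gets $\ell^2\langle\kappa_n^2\rangle\lesssim\Rey^2$ rather than the needed $\Rey^{3/2-1/2n}$. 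Third, the Young conjugate of your $\theta_p$ has denominator $4pn-p+6$, so the $n$-independent denominator $p+6$ in \eqref{thm1eqn1} cannot emerge from this route.

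The denominator $p+6$ is the fingerprint of the Sobolev exponent $6$ in $d=3$, and the correct workhorse --- consistent with the analogous proofs the paper does write out (Theorem~\ref{th:Burgers} and Lemma~\ref{lemma:shell-2}) --- is
$\Vert\nabla\bm u\Vert_\infty\leqslant c\,L^{-3/p}\Vert\nabla\bm u\Vert_2^{1-\sigma}\Vert\nabla\bm u\Vert_6^{\sigma}\leqslant c\,L^{-3/p}\,\kappa_{2,1}^{\sigma}F_1^{1/2}$ with $\sigma=\tfrac{3(p-2)}{2p}$, obtained from $L^p$-interpolation between $L^2$ and $L^6$ plus $\Vert\nabla\bm u\Vert_6\leqslant cH_2^{1/2}$. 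Dividing the rung by $F_n$, time averaging, and applying H\"older and Jensen gives
$\nu\langle\kappa_{n+1,n}^2\rangle\leqslant c\,L^{-3/p}\langle\kappa_{2,1}^2\rangle^{\sigma/2}\langle H_1\rangle^{1/2}+\tau_n^{-1}$;
taking $n=1$ closes self-consistently with exponent $\tfrac{2}{2-\sigma}=\tfrac{4p}{p+6}$, and $\langle H_1\rangle\leqslant c\,\nu^2\ell^{-1}a_\ell^3\Rey^3$ then yields $\ell^2\langle\kappa_{2,1}^2\rangle\leqslant c\,a_\ell^{6(p-2)/(p+6)}\Rey^{6p/(p+6)}$. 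The general rung gives the same bound for $\langle\kappa_{n+1,n}^2\rangle\geqslant\langle\kappa_{n,1}^2\rangle$, and your final H\"older step then reproduces \eqref{thm1eqn1} exactly, including both exponents. So the fix is not bookkeeping but replacing your interpolation inequality by the $L^2$--$L^6$ one; with your $\theta_p$ the ``algebra collapsing to the stated exponents'' would not in fact occur.
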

%%%%%%%%%%%%
\par\smallskip\noindent
If the energy spectrum is as above, an argument analogous to that used for $p=\infty$ shows that the scaling in Theorem~\ref{th:dg02-m} is consistent with 
\bel{qp1}
1<q\leqslant \frac{8}{3}-\frac{2}{p}\,.
\ee
In particular, $p=2$ yields the Kolmogorov spectrum $q=5/3$. More generally, by altering the value of $p$ in the range $2 \leqslant p \leqslant \infty$ we find that the upper bound of $q$, designated as $q_{ub}$, lies in the range 
\bel{qubrange}
\frac{5}{3} \leqslant q_{ub} \leqslant \frac{8}{3}\,.
\ee
These methods have also been applied to magnetohydrodynamic turbulence to show that the Iroshnikov--Kraichnan total-energy spectrum can be excluded when there is no cross-correlation between the velocity and magnetic fields \citep{ggkppppss16}.

%%%%%%%%%%%%%%%%%%%%%%%%
\subsection{\small\color{blue} The $2D$ Navier--Stokes equations}\label{2dNSE}

Consider the Navier--Stokes equations on the periodic square $\mathcal{V}=[0,L]^2$. The definitions introduced in \S\ref{sect:introduction}
extend unchanged to two dimensions ($d=2$). However, the absence of vortex stretching leads to a different estimate for the time average of $\kappa_n$. 
\par\smallskip%\noindent
Two-dimensional turbulence is characterized by a dual cascade consisting of a direct cascade of enstrophy 
(defined as $\Vert\bm\omega\Vert_2^2$ with $\bm\omega=\nabla\times\bm u$) from
$\ell^{-1}$ to high wavenumbers and an inverse cascade of energy from $\ell^{-1}$ to low wavenumbers 
[Kraichnan and Montgomery, \citeyear{km80}; \citealp{kg02}; \citealp{t02}; \citealp{be12}].
The enstrophy cascade ends at a cutoff wavenumber $k_c$, beyond which enstrophy is dissipated by viscosity.
In an unbounded domain or in a bounded domain before statistical equilibrium is established,
the energy cascade continues to extend to ever smaller  wavenumbers, and a quasi-steady spectrum forms at wavenumbers between the
inverse integral scale and $\ell^{-1}$.

We study the spectra of the two cascades separately by considering first the case $\ell\sim L/2\pi$ (direct cascade)
and then $\ell\ll L$ (inverse cascade). 

%The time scale that enters the definition of $\kappa_{n}$ should now be taken as follows \citep{gp07}
%\begin{equation}
%\tau^{-1}=\ell^{-2}\nu\mathit{Gr}^{1/2}(1+\ln\Gr)^{1/2}\leqslant c\,\ell^{-2}\nu \Rey(1+\ln\Rey)^{1/2}.
%\end{equation}
%%%%%%%%%%%%
\subsubsection{\small\color{blue}Direct cascade of enstrophy}\label{2Ddirect}

The following theorem%
\footnote{There is a small difference in the powers of $a_\ell$ and $\ln\Rey$ between \eqref{eq:kappa2-2d} and the original version of the theorem. This is due to the choice of $\tau_n$, which modifies the estimates of $\langle\kappa_1^2\rangle$ and $\langle\kappa_{2,1}^2\rangle$.
With definition \eqref{eq:tau1}, as $\Gr\to\infty$
$$
L^2\langle\kappa_1^2\rangle\leqslant c\, a_\ell \Rey\qquad \text{and} \qquad
L^2\langle\kappa_{2,1}^2\rangle\leqslant c\, a_\ell^2 \Rey \,.
$$}
describes the behaviour of $\langle\kappa_{n}^2\rangle$ as $\Gr\to\infty$ while $a_\ell=O(1)$.
\begin{theorem}[\cite{gp07}]
\label{th:gp07}
For $n\geqslant 2$
\bel{eq:kappa2-2d}
L^2\langle\kappa_n^2\rangle\leqslant c_n
a_{\ell}^{3\left(1-\frac{1}{n}\right)} \Rey^{\frac{3}{2}-\frac{1}{n}}\,[\ln(a_\ell^2\Rey)]^{\frac{1}{2}-\frac{1}{n}}
\qquad\mbox{as}\qquad \Gr\to\infty\,.
\ee
\end{theorem}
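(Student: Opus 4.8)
The plan is to reduce the bound on the single ratio $\langle\kappa_n^2\rangle$ to bounds on the consecutive ratios $\langle\kappa_{j,j-1}^2\rangle$, and then to estimate these one rung at a time from the ladder \eqref{eq:ladder}. The starting point is the telescoping identity that follows from the definition \eqref{eq:ratios},
\[
\kappa_n^{2n}=\frac{F_n}{F_0}=\prod_{j=1}^n\frac{F_j}{F_{j-1}}=\prod_{j=1}^n\kappa_{j,j-1}^2,
\]
so that $L^2\kappa_n^2=\prod_{j=1}^n (L^2\kappa_{j,j-1}^2)^{1/n}$. Applying the generalised H\"older inequality to the long-time average (with $n$ factors, each raised to the power $1/n$) gives $L^2\langle\kappa_n^2\rangle\leqslant\prod_{j=1}^n\langle L^2\kappa_{j,j-1}^2\rangle^{1/n}$, so it suffices to estimate each factor separately and take the geometric mean.

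First I would dispose of the two lowest rungs. The rung $j=1$ is the energy balance and gives $\langle L^2\kappa_1^2\rangle\leqslant c\,a_\ell\Rey$. The rung $j=2$ is where the two-dimensional structure first enters: because there is no vortex stretching, the enstrophy equation (the $n=1$ case of the ladder) carries no $\Vert\nabla\bm u\Vert_\infty$ production term, so only the forcing time scale \eqref{eq:tau1} survives and $\langle L^2\kappa_{2,1}^2\rangle\leqslant c\,a_\ell^2\Rey$. For the remaining rungs $j\geqslant 3$ I would divide the ladder \eqref{eq:ladder} at level $j-1$ by $F_{j-1}$, yielding
\[
\tfrac12\,\frac{\dot F_{j-1}}{F_{j-1}}\leqslant-\nu\,\kappa_{j,j-1}^2+c_{j-1}\bigl(\Vert\nabla\bm u\Vert_\infty+\tau_{j-1}^{-1}\bigr),
\]
and then take the long-time average. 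Since $F_{j-1}$ is bounded below by the forcing term in \eqref{Fndef} and above by virtue of the global regularity of the two-dimensional equations, $\langle\frac{\mathrm{d}}{\mathrm{d}t}\ln F_{j-1}\rangle$ vanishes, leaving $\nu\langle\kappa_{j,j-1}^2\rangle\leqslant c_{j-1}\langle\Vert\nabla\bm u\Vert_\infty\rangle+c_{j-1}\tau_{j-1}^{-1}$.

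The heart of the argument, and the step I expect to be the main obstacle, is the estimate of $\langle\Vert\nabla\bm u\Vert_\infty\rangle$. In two dimensions $\Vert\nabla\bm u\Vert_\infty$ is not controlled by the enstrophy alone, but a Brezis--Gallouet-type logarithmic Sobolev inequality gives $\Vert\nabla\bm u\Vert_\infty\leqslant c\,L^{-1}H_1^{1/2}\bigl[\ln(\cdots)\bigr]^{1/2}$, where the argument of the logarithm is a ratio of a higher norm to a lower one, controlled by $L\kappa_{2,1}$. I would pass to the time average using the Cauchy--Schwarz inequality to separate $H_1^{1/2}$ from the logarithm, and Jensen's inequality (concavity of $\ln$) to pull the logarithm outside the average, replacing its argument by $\langle L^2\kappa_{2,1}^2\rangle^{1/2}\sim a_\ell\Rey^{1/2}$; this is precisely what produces the factor $[\ln(a_\ell^2\Rey)]^{1/2}$. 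Combined with the control of the enstrophy $\langle H_1\rangle$ afforded by the two-dimensional dynamics, this yields a bound of the form $\langle\Vert\nabla\bm u\Vert_\infty\rangle\leqslant c\,\nu L^{-2}a_\ell^{3}\Rey^{3/2}[\ln(a_\ell^2\Rey)]^{1/2}$, uniformly in $j$; the $\Rey^{3/2}$ and the logarithm are the essential features, while the precise power of $a_\ell$ requires careful nondimensionalisation of the Sobolev norm. Because $a_\ell=O(1)$ and $\Rey\to\infty$ in the direct-cascade regime, this term dominates the subordinate forcing contribution $\tau_{j-1}^{-1}$, so that $\langle L^2\kappa_{j,j-1}^2\rangle\leqslant c\,a_\ell^{3}\Rey^{3/2}[\ln(a_\ell^2\Rey)]^{1/2}$ for every $j\geqslant 3$.

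Finally I would assemble the geometric mean: with two base rungs scaling as $a_\ell\Rey$ and $a_\ell^2\Rey$ and the remaining $n-2$ rungs scaling as $a_\ell^{3}\Rey^{3/2}[\ln(a_\ell^2\Rey)]^{1/2}$, the exponents combine as
\[
\frac{1+2+3(n-2)}{n}=3\Bigl(1-\tfrac1n\Bigr),\qquad
\frac{1+1+\tfrac32(n-2)}{n}=\tfrac32-\tfrac1n,\qquad
\frac{\tfrac12(n-2)}{n}=\tfrac12-\tfrac1n
\]
for $a_\ell$, $\Rey$ and the logarithm respectively, reproducing \eqref{eq:kappa2-2d}. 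The genuinely delicate points are the logarithmic Sobolev inequality and the manipulation of its logarithm through the time average (extracting it with the correct argument), together with the justification that $\langle\dot F_{j-1}/F_{j-1}\rangle$ vanishes, which rests on the global regularity of the two-dimensional Navier--Stokes equations.
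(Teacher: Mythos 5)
Your skeleton---the telescoping identity $\kappa_n^{2n}=\prod_{j=1}^n\kappa_{j,j-1}^2$ followed by H\"older over the rungs, the two low rungs from the energy and vorticity balances (which indeed carry no $\Vert\nabla\bm u\Vert_\infty$ term in two dimensions and give $a_\ell\Rey$ and $a_\ell^2\Rey$, matching the paper's footnote), and division of the ladder by $F_{j-1}$ with $\langle\dot F_{j-1}/F_{j-1}\rangle\to 0$---is exactly the structure of the Gibbon--Pavliotis argument the paper invokes, and your exponent bookkeeping reproduces \eqref{eq:kappa2-2d}. However, there is a genuine gap at the step you yourself identify as the heart of the proof. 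The inequality $\Vert\nabla\bm u\Vert_\infty\leqslant c\,L^{-1}H_1^{1/2}[\ln(\cdots)]^{1/2}$ is false: $L^{-1}H_1^{1/2}$ is the root-mean-square velocity gradient, and bounding the sup by the r.m.s.\ up to a logarithm is precisely the suppressed-fluctuations \emph{assumption} \eqref{eq:m} with $p=2$, not a theorem (in two dimensions one cannot control $L^\infty$ by $L^2$ with only a logarithmic loss; the borderline space is $H^1$ of the quantity whose supremum is taken, hence $H^2$ of $\bm u$). The correct form is the paper's \eqref{eq:BG}, $\Vert\nabla\bm u\Vert_\infty\leqslant c\,F_2^{1/2}[1+\ln(L\kappa_{3,2})]^{1/2}$, whose prefactor is the \emph{palinstrophy} norm $H_2^{1/2}=\Vert\nabla\bm\omega\Vert_2$, not the enstrophy. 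The essential analytic ingredient is therefore the palinstrophy bound $\langle H_2\rangle\leqslant c\,\nu^2\ell^{-4}a_\ell^2\Rey^3$ of Alexakis--Doering and Gibbon--Pavliotis, which your proposal never invokes; ``control of the enstrophy $\langle H_1\rangle$'' is not enough.

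The discrepancy is not cosmetic. Carried through consistently, your version gives $\langle\Vert\nabla\bm u\Vert_\infty\rangle\leqslant c\,L^{-1}\langle H_1\rangle^{1/2}[\ln]^{1/2}\leqslant c\,\nu L^{-2}a_\ell^{2}\Rey^{3/2}[\ln]^{1/2}$, i.e.\ $a_\ell^2$ per upper rung and a final power $a_\ell^{2-1/n}$ rather than $a_\ell^{3(1-1/n)}$; the $a_\ell^3$ per rung that you write down is asserted rather than derived (it emerges only from $\langle F_2\rangle^{1/2}\leqslant c\,\nu\ell^{-2}a_\ell\Rey^{3/2}=c\,\nu L^{-2}a_\ell^3\Rey^{3/2}$). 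The $\Rey^{3/2}$ happens to coincide because $\langle H_1\rangle$ and $\langle H_2\rangle$ both scale as $\Rey^3$, but that is an accident of the scaling, not a justification. A secondary, fixable point: the argument of the logarithm in \eqref{eq:BG} is $L\kappa_{3,2}$, which is controlled by the very $\kappa_{n+1,n}$ being estimated, so after Cauchy--Schwarz and Jensen one obtains an implicit inequality of the form $x\leqslant A(1+\ln x)^{1/2}+B$ to be solved for $x=L^2\langle\kappa_{n+1,n}^2\rangle$, rather than a logarithm whose argument is already known to be of order $a_\ell^2\Rey$.
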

%%%%%%%%%%%%
\par\smallskip\noindent
It follows that
\bel{eq:gp07}\\
L \langle\kappa_n\rangle
%\leqslant L\langle\kappa_n^2\rangle^{1/2}
\leqslant c_{n}a_{\ell}^{\frac{3}{2}\left(1-\frac{1}{n}\right)} \Rey^{\frac{3}{4}-\frac{1}{2n}}\,[\ln(a_\ell^2\Rey)]^{\frac{1}{4}-\frac{1}{2n}}\,.
\ee
We want to compare this bound with a practical estimate for $L K_n$ under the assumption that $\ell\sim L/2\pi$. Consider the mean enstrophy dissipation rate
\begin{equation}
\eta_\nu=\nu L^{-2}\Langle H_2\Rangle\,.
\end{equation}
This is bounded \citep{ad06,gp07} and, at large $\Rey$, can be estimated as
\begin{equation}\label{eq:eta}
\eta_\nu \sim \frac{U^3}{\ell^3}\,.
\end{equation}
The flow is assumed to be isotropic and to have an
energy spectrum of the form
\bel{eq:spectrum}
\mathcal{E}(k) \sim \eta_\nu^{2/3}\ell^{3-q}k^{-q} \qquad (\ell^{-1}\leqslant k\leqslant k_{c})
\ee
with $1<q<5$.
The mean enstrophy dissipation rate can be obtained from the energy spectrum via the relation \citep{my75}\,:
\begin{equation}
\label{eq:eta-spectrum}
\eta_\nu = \nu \int_{\ell^{-1}}^{k_{c}} k^4 \mathcal{E}(k)dk \sim \nu \eta_\nu^{2/3}\ell^{3-q} k_{c}^{5-q}\,,
\end{equation}
where the contributions coming from wavenumbers $k > k_c$ have been ignored. Combining \eqref{eq:eta}, \eqref{eq:spectrum}, and \eqref{eq:eta-spectrum} yields
\begin{equation}
\label{eq:knu}
\ell k_{c} \sim \Rey^\frac{1}{5-q}\,.
\end{equation}
%For $q=3$, this estimate agrees with the usual prediction for the direct-cascade regime,
%i.e. $\ell k_{c}=\Rey^{2}$ \citep{km80,be12}.
By plugging \eqref{eq:spectrum} into \eqref{eq:K} and using \eqref{eq:knu}, we find\,:
\begin{equation}
\label{eq:scaling-Kn-2d}
L K_n \sim \ell K_n \sim (\ell k_{c})^{1-\frac{q-1}{2n}} \sim \Rey^{\frac{1}{5-q}-\frac{1}{2n}\left(\frac{q-1}{5-q}\right)}.
\end{equation}
We now compare \eqref{eq:scaling-Kn-2d} with \eqref{eq:gp07} and conclude that the Reynolds-number scaling of $K_n$ is consistent with that of $\langle\kappa_n\rangle$  provided that
\bel{eq:bound-2d}
q \leqslant \frac{11n-12}{3n-4}\,.
\ee
Since this must hold for all $n\geqslant 2$ and the right-hand side of \eqref{eq:bound-2d} is a decreasing function of $n$, 
we find that
\bel{qbd}
1< q \leqslant \frac{11}{3}\,.
\ee
\begin{remark}
The bound in \eqref{eq:bound-2d} can also be derived by comparing the high-$\Rey$ scaling of the $2(n-1)$-th moment of the enstrophy spectrum with the bound for $\langle\kappa_{n,1}\rangle^{2(n-1)}$ obtained in \citet{gp07}.
\end{remark}
\begin{remark}
The bound in \eqref{eq:bound-2d} agrees with a practical estimate of \citet{sf75} and a rigorous result of \citet{e96}. The exponent $-11/3$ also describes the energy spectrum of spiral structures in two-dimensional turbulence \citep{gilbert88}.
\end{remark}
In numerical simulations of isotropic turbulence, the following two types of forcing are commonly used\,: (i)
strictly monochromatic forcings  with a single wavenumber $\ell^{-1}$ and (ii)
forcings that maintain a constant energy injection rate $\epsilon$, i.e.
\bel{}
\bm f=\epsilon\,L^2\frac{\mathcal{P}\bm u}{\Vert\mathcal{P}\bm u\Vert_2}\,,
\ee
where the operator $\mathcal{P}$ projects the velocity field on a finite set of spatial modes.
For these forcings, it is possible to derive a more stringent bound on $q$.
Indeed, as $\Gr\to\infty$ the general estimate for
the mean enstrophy dissipation rate \citep{ad06,gp07}
\bel{}
\langle H_2\rangle \leqslant c\,\nu^2\ell^{-4}a_\ell^2\Rey^3
\ee
is replaced with \citep{ad06}
\bel{eq:monochromatic}
\langle H_2\rangle \leqslant c\,\nu^2\ell^{-4}a_\ell^2\Rey^2.
\ee
Using \eqref{eq:monochromatic} in the proof of Theorem~\ref{th:gp07} yields the following result.
\begin{theorem}
\label{th:gp07-monochromatic}
For $n\geqslant 2$ and a monochromatic or a constant-energy-input forcing
\bel{eq:kappa-monochromatic}
L^2\langle\kappa_n^2\rangle\leqslant c_{n}a_{\ell}^{3-3/n}\, \Rey \,[\ln(a_\ell^2\Rey)]^{1/2-1/n}
\quad\mbox{as}\quad\Gr\to\infty\,.
\ee
\end{theorem}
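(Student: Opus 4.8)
The plan is to run the proof of Theorem~\ref{th:gp07} essentially verbatim and to isolate the single place where the mean enstrophy dissipation rate enters, substituting the sharper bound \eqref{eq:monochromatic} for the generic one. First I would recall the architecture of that proof. Starting from the ladder \eqref{eq:ladder}, divide by $F_n$ to obtain a differential inequality for $\tfrac{d}{dt}\ln F_n$, and exploit the log-convexity of the sequence $\{F_n\}$ --- equivalently the ordering of the $\kappa_{n,r}$ --- which gives $F_{n+1}/F_n\geqslant\kappa_{n,r}^2$ for every $r<n$. Time-averaging, and using that for $d=2$ the $F_n$ are bounded below (by $\tau_n^2\Vert\nabla^n\bm f\Vert_2^2>0$) and above (two-dimensional regularity) so that $\langle\tfrac{d}{dt}\ln F_n\rangle=0$, one is left with $\nu\langle\kappa_{n,2}^2\rangle\leqslant c_n\langle\Vert\nabla\bm u\Vert_\infty\rangle+c_n\tau_n^{-1}$, with $\tau_n$ as in \eqref{eq:tau1} and $\delta=0$.

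Two ingredients then complete the argument, exactly as for Theorem~\ref{th:gp07}. The base estimate comes from $\kappa_2^2=\kappa_{2,1}\kappa_1$: by Cauchy--Schwarz together with $L^2\langle\kappa_1^2\rangle\leqslant c\,a_\ell\Rey$ and $L^2\langle\kappa_{2,1}^2\rangle\leqslant c\,a_\ell^2\Rey$ one obtains $L^2\langle\kappa_2^2\rangle\leqslant c\,a_\ell^{3/2}\Rey$; these quantities are controlled through the energy balance and through $\tau_1^{-1}$ in the $n=1$ enstrophy balance (which in $d=2$ carries no $\Vert\nabla\bm u\Vert_\infty$ term), so they are untouched by the choice of forcing. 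The velocity-gradient term is controlled by a two-dimensional logarithmic Sobolev (Brezis--Gallouet) inequality of the form $\Vert\nabla\bm u\Vert_\infty\leqslant c\,H_2^{1/2}[\ln(L^2\kappa_{3,2}^2)]^{1/2}$; Cauchy--Schwarz and Jensen's inequality (concavity of the logarithm) then give $\langle\Vert\nabla\bm u\Vert_\infty\rangle\leqslant c\,\langle H_2\rangle^{1/2}[\ln(a_\ell^2\Rey)]^{1/2}$, the argument of the logarithm entering only through its leading power of $\Rey$.

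This last inequality is the only place where the amplitude of $\langle H_2\rangle$ appears. Replacing the generic bound $\langle H_2\rangle\leqslant c\,\nu^2\ell^{-4}a_\ell^2\Rey^3$ by \eqref{eq:monochromatic}, namely $\langle H_2\rangle\leqslant c\,\nu^2\ell^{-4}a_\ell^2\Rey^2$, converts the factor $\langle H_2\rangle^{1/2}$ from $\Rey^{3/2}$ to $\Rey$ while leaving its $a_\ell^2$ dependence --- and hence the $a_\ell$- and logarithm-exponents --- intact. This produces $L^2\langle\kappa_{n,2}^2\rangle\leqslant c_n\,a_\ell^{3}\,\Rey\,[\ln(a_\ell^2\Rey)]^{1/2}$. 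Finally I would feed this and the base estimate into the interpolation identity $\kappa_n^2=\kappa_{n,2}^{2(n-2)/n}\kappa_2^{4/n}$, which follows from $\kappa_n^{2n}=(F_n/F_2)(F_2/F_0)$: Hölder's inequality gives $L^2\langle\kappa_n^2\rangle\leqslant(L^2\langle\kappa_{n,2}^2\rangle)^{(n-2)/n}(L^2\langle\kappa_2^2\rangle)^{2/n}$, and inserting the two bounds reproduces the $a_\ell^{3-3/n}$, $\Rey^{1}$, and $[\ln(a_\ell^2\Rey)]^{1/2-1/n}$ of \eqref{eq:kappa-monochromatic}.

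The step that requires genuine care is confirming that the improvement lands solely on the power of $\Rey$. For the generic forcing the $\langle\Vert\nabla\bm u\Vert_\infty\rangle$ term strictly dominated $\tau_n^{-1}$, since $\langle H_2\rangle^{1/2}\sim\Rey^{3/2}$ beats $\tau_n^{-1}\sim\Rey$; after the substitution $\langle H_2\rangle^{1/2}\sim\Rey$ becomes exactly the same order in $\Rey$ as $\tau_n^{-1}$. I would therefore verify that, with $a_\ell=O(1)$, both contributions to $\nu\langle\kappa_{n,2}^2\rangle$ are $O(\Rey)$ and neither raises the exponent above unity, and separately confirm that the boundary terms in the time-average vanish and that both the base estimates and the argument of the logarithm are genuinely insensitive to the amplitude of $\langle H_2\rangle$. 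Once these checks are in place, \eqref{eq:kappa-monochromatic} follows.
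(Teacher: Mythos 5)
Your proposal is correct and follows essentially the same route as the paper, which proves Theorem~\ref{th:gp07-monochromatic} precisely by rerunning the proof of Theorem~\ref{th:gp07} with the generic bound $\langle H_2\rangle\leqslant c\,\nu^2\ell^{-4}a_\ell^2\Rey^3$ replaced by \eqref{eq:monochromatic}; your exponent bookkeeping ($(n-2)/n+2/n=1$ for $\Rey$, $3-3/n$ for $a_\ell$, $1/2-1/n$ for the logarithm) matches the stated result, and your check that the $\tau_n^{-1}$ term remains $O(\Rey)$ and does not spoil the bound is a worthwhile addition.
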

\bigskip
\noindent
By comparing \eqref{eq:kappa-monochromatic} with \eqref{eq:scaling-Kn-2d}, we find that for these types of forcing
\bel{}
q\leqslant 3+\frac{2}{n-1}\,,
\ee
which yields for every $n$
\bel{}
1<q\leqslant 3\,.
\ee 
Note that, up to logarithmic corrections, Kraichnan's prediction for the energy spectrum in the enstrophy-cascade range is
$\mathcal{E}(k)\sim k^{-3}$ \citep{k67,k71}. 

%%%%%%%%%%%
\subsubsection{\small\color{blue}The inverse cascade of energy}\label{2Dinverse}

To investigate the regime of the  inverse cascade, we study the behaviour of $\langle\kappa_n\rangle$ in the limit in which $a_\ell\to\infty$
while the Reynolds number based on the characteristic velocity at the forcing scale is $O(1)$. More precisely,
consider $u_f^2 = L^{-2}\langle\Vert\bm u_f\Vert_2^2\rangle$, where 
\begin{equation}
\bm u_f(\bm x,t)=\sum_{\vert\bm k\vert>\ell^{-1}} \mathrm{e}^{{\rm i} \bm k\cdot\bm x}\hat{\bm u}(\bm k,t)\,,
\end{equation}
and define
\begin{equation}
\Rey_f=\frac{u_f\ell}{\nu}\,.
\end{equation}
If $Re_f\sim 1$, then $\ell^{-1}\sim k_c$ and the direct cascade of enstrophy is negligible \citep[see][]{ab18}.
Furthermore, if the energy spectrum $\mathcal{E}(k)\sim k^{-q}$ for $L^{-1}\leqslant k\leqslant \ell^{-1}$ and is negligible otherwise,
it can be shown that \citep{sy94,t07}
\begin{equation}
U^2 = a_\ell^{q-1}u_f^2 
\end{equation}
and hence
\begin{equation}
\Rey = a_\ell^{(q-1)/2}\,\Rey_f\,.
\label{eq:Re-Ref}
\end{equation}
Therefore, in the regime considered here, $\Rey\sim a_\ell^{(q-1)/2}$.
We can now prove a bound on $\langle\kappa_n\rangle$ which is relevant to the energy cascading range.
\begin{theorem}
\label{th:kn-ic}
If $\mathcal{E}(k)$ is steeper than $k^{-5}$  and $\Rey_f=O(1)$, then for $n\geqslant 1$ and as $a_\ell\to\infty$
\bel{eq:kn-ic}
L^2\langle\kappa_{n}^2\rangle \leqslant c\, a_\ell^{n}\,\Rey\,.
\ee
\end{theorem}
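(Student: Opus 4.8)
The plan is to run exactly the ladder-and-time-average machinery behind Theorems \ref{th:gp07} and \ref{th:gp07-monochromatic}, but to exploit the two features peculiar to the inverse-cascade regime — $\Rey_f=O(1)$ and a spectrum steeper than $k^{-5}$ — in order to dispose of the nonlinear term cleanly. It is this clean disposal that removes the logarithmic factors present in the direct-cascade bounds. I would start from the $d=2$ ladder \eqref{eq:ladder} at level $n$, divide through by $F_n$, and use $F_{n+1}/F_n=\kappa_{n+1,n}^2$ to obtain
\[
\tfrac12\tfrac{d}{dt}\ln F_n \leqslant -\nu\kappa_{n+1,n}^2 + c_n\big(\Vert\nabla\bm u\Vert_\infty+\tau_n^{-1}\big).
\]
Taking the long-time average \eqref{tadef} annihilates the total-derivative term: in two dimensions the solution is smooth and $F_n$ is bounded above (by regularity) and below (by the positive forcing term $\tau_n^2\Vert\nabla^n\bm f\Vert_2^2$), so $\ln F_n$ is bounded. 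This leaves $\nu\langle\kappa_{n+1,n}^2\rangle\leqslant c_n(\langle\Vert\nabla\bm u\Vert_\infty\rangle+\tau_n^{-1})$.

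Next I would pass from $\kappa_{n+1,n}$ to $\kappa_{n+1}=\kappa_{n+1,0}$ using the ordering $\kappa_{n+1,0}\leqslant\kappa_{n+1,n}$ recorded after \eqref{eq:ratios}, giving $\nu\langle\kappa_{n+1}^2\rangle\leqslant c_n(\langle\Vert\nabla\bm u\Vert_\infty\rangle+\tau_n^{-1})$, and then compare the two terms on the right. With the $d=2$ choice $\delta=0$ in \eqref{eq:tau1} one has $\tau_n^{-1}=\nu\ell^{-2}a_\ell^{\,n-1}(1+\Rey)$, so this term alone contributes $\ell^{-2}a_\ell^{\,n-1}(1+\Rey)$ to $\langle\kappa_{n+1}^2\rangle$; multiplying by $L^2=a_\ell^2\ell^2$ turns it into $a_\ell^{\,n+1}(1+\Rey)$, which is precisely the advertised $a_\ell^{\,n+1}\Rey$ as $\Gr\to\infty$. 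Everything therefore hinges on the nonlinear term being subdominant.

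The crux, and the step I expect to be the main obstacle, is the estimate of $\langle\Vert\nabla\bm u\Vert_\infty\rangle$. This is where the hypothesis that $\mathcal{E}(k)$ is steeper than $k^{-5}$ is essential: it forces $\bm u\in H^{2+\varepsilon}$, so that the two-dimensional Sobolev embedding delivers $\nabla\bm u\in L^\infty$ directly, with no Brezis--Gallouet logarithm (which is exactly why \eqref{eq:kn-ic} carries no logarithmic correction), and, more importantly, it makes the controlling norm dominated by the energy-containing wavenumbers near $L^{-1}$ rather than by the forcing scale. Consequently $\Vert\nabla\bm u\Vert_\infty$ scales like the large-scale gradient $U/L$; since $U=\nu\Rey/\ell$ and $L=a_\ell\ell$, this reads $\langle\Vert\nabla\bm u\Vert_\infty\rangle\lesssim \nu\ell^{-2}a_\ell^{-1}\Rey$. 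Comparing with $\tau_n^{-1}=\nu\ell^{-2}a_\ell^{\,n-1}(1+\Rey)$ shows the nonlinear term is smaller by a factor $a_\ell^{-n}$ and may be dropped for every $n\geqslant1$.

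Assembling the pieces yields $L^2\langle\kappa_{n+1}^2\rangle\leqslant c_n a_\ell^{\,n+1}\Rey$ as $\Gr\to\infty$, i.e. the claim for moments of order $m=n+1\geqslant2$; the remaining case $m=1$ is the bound $L^2\langle\kappa_1^2\rangle\leqslant c\,a_\ell\Rey$ already recorded in the footnote to \S\ref{2Ddirect}, so the estimate holds for all $n\geqslant1$. The one point needing care beyond the heuristic is making ``dominated by large scales'' rigorous for a genuine (here smooth) solution rather than for the idealized power law: I would phrase it as a bound of $\Vert\nabla\bm u\Vert_\infty$ by a norm governed by the energy $U=L^{-1}\langle\Vert\bm u\Vert_2^2\rangle^{1/2}$, using the steepness to absorb the high-wavenumber tail, and then invoke $\Rey_f=O(1)$ to guarantee that the forcing-scale contribution cannot overturn the large-scale one.
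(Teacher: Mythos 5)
Your skeleton (ladder, divide by $F_n$, time average, compare the nonlinear term with $\tau_n^{-1}$) is the right one, but the step you yourself flag as the crux is a genuine gap, and the specific bound you propose for it is false. You claim that a steep spectrum makes $\Vert\nabla\bm u\Vert_\infty$ scale like the large-scale gradient $U/L\sim \nu\ell^{-2}a_\ell^{-1}\Rey$. Gradients are controlled by the high-wavenumber end of the spectrum, not the energy-containing scales: under your own spectral assumption with $1<q<5$ one has $\Vert\nabla\bm u\Vert_2/L=(\epsilon/\nu)^{1/2}\gtrsim u_f/\ell=\nu\ell^{-2}\Rey_f$, which exceeds $U/L=\nu\ell^{-2}a_\ell^{(q-3)/2}\Rey_f$ by a factor that diverges as $a_\ell\to\infty$ whenever $q<3$; since $\Vert\nabla\bm u\Vert_\infty\geqslant L^{-1}\Vert\nabla\bm u\Vert_2$, your estimate cannot hold. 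More importantly, no rigorous bound of the form $\langle\Vert\nabla\bm u\Vert_\infty\rangle\lesssim U/L$ is available. The paper's only rigorous tool here is the Brezis--Gallou\"et inequality \eqref{eq:BG} combined with $\langle F_2\rangle\leqslant c\,\nu^2\ell^{-4}a_\ell^2(\Rey^2+\Rey^3)$, which yields a nonlinear contribution scaling as $a_\ell^{3+3(q-1)/4}\Rey_f^{3/2}$ after using $\Rey=a_\ell^{(q-1)/2}\Rey_f$; this is dominated by the $\tau_n^{-1}$ contribution $a_\ell^{n+1+(q-1)/2}\Rey_f$ only for $n\geqslant 3$ (given $q<5$). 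So your plan of running the ladder at every rung $n\geqslant 1$ and dropping the nonlinear term has no rigorous support at $n=1,2$.

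This forces the two ingredients your proposal omits. First, the bottom rungs are handled not by the generic ladder but by the exact energy and enstrophy balances \eqref{eq:F0+F1}, in which the nonlinear term is absent identically in two dimensions, giving $L^2\langle\kappa_1^2\rangle\leqslant c\,a_\ell(1+\Rey)$ and $L^2\langle\kappa_{2,1}^2\rangle\leqslant c\,a_\ell^2(1+\Rey)$. Second, since the high-rung estimate \eqref{eq:n3} only gives $L^2\langle\kappa_{n,r}^2\rangle\leqslant c\,a_\ell^{n+1}(1+\Rey)$ for $n>r\geqslant2$ (one power of $a_\ell$ too many if used directly on $\kappa_n=\kappa_{n,0}$), the advertised $a_\ell^{n}$ is recovered by the H\"older interpolation
\begin{equation*}
\langle\kappa_n^2\rangle\leqslant\langle\kappa_{n,1}^2\rangle^{(n-1)/n}\,\langle\kappa_1^2\rangle^{1/n},
\end{equation*}
which trades the excess power of $a_\ell$ at the top against the much better $\kappa_1$ bound. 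Your alternative of using the ordering $\kappa_{n+1,0}\leqslant\kappa_{n+1,n}$ together with the ladder at level $n$ would indeed give the correct power of $a_\ell$ \emph{if} the nonlinear term could be discarded at every level, but as explained above that discarding is exactly what cannot be justified for the low rungs, so the interpolation route is not cosmetic --- it is what makes the theorem hold for all $n\geqslant1$.
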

\begin{proof}
Recall that
\begin{equation}
\langle F_2\rangle \leqslant c\,\langle H_2\rangle \leqslant  c\,\nu^2\ell^{-4}a_\ell^2 (\Rey^2+\Rey^3)\,,
\label{eq:F2}
\end{equation}
where the bound on $\langle H_2\rangle$ can be found in \citet{ad06} and Gibbon and Pavliotis \citeyearpar{gp07}.
Using the tighter bound for monochromatic or constant-energy-input forcings would not change the result in this case.
\par\smallskip%\noindent
In addition, the following form of the Brezis--Gallou\"{e}t inequality holds
\begin{equation}
\Vert\nabla\bm u\Vert_\infty \leqslant c\, F_2^{1/2}[1+\ln(L\kappa_{3,2})]^{1/2}\, .
\label{eq:BG}
\end{equation}
Now note that, as $a_\ell\to\infty$, the $F_n$ satisfy the same ladder as in \eqref{eq:ladder}  (see Appendix~\ref{appendix-A}). 
Thus, following \citet{gp07}, we divide through the ladder by $F_n$ and time average. We then use \eqref{eq:BG} together with $\kappa_{n,r}\leqslant\kappa_{n+1,n}$ for $2\leqslant r<n$ and Jensen's inequality on the logarithm to find
\begin{equation}
L^2\langle\kappa_{n,r}^2\rangle \leqslant
c\, L^2\nu^{-1} \langle F_2\rangle^{1/2} [1+\ln(L^2\langle\kappa_{n,r}^2\rangle)]^{1/2}+c\, a_\ell^{n+1}(1+\Rey).
\end{equation}
By using \eqref{eq:Re-Ref} and \eqref{eq:F2}, we can see that the first term on the right-hand side behaves as $a_\ell^{3+3(q-1)/4}\,\Rey_f^{3/2}$,
whereas the second behaves as $a_\ell^{n+1+(q-1)/2}\,\Rey_{f}$. Since $\Rey_f=O(1)$, $n\geqslant 3$, and $q<5$, the
second term dominates over the first. 
For $n>r\geqslant 2$ and in the limit $a_\ell\to\infty$ while $\Rey_f=O(1)$, we thus find
\bel{eq:n3}
L^2\langle\kappa_{n,r}^2\rangle \leqslant c\, a_\ell^{n+1}(1+\Rey).
\ee
By adapting the proofs of \citet{gp07} in the manner described in Appendix~\ref{appendix-A}, it is also possible to show that as $a_\ell\to\infty$
\bel{eq:F0+F1}
  \frac{1}{2}\dot{F}_0\leqslant -\nu F_{1} + \frac{F_0}{2\tau_0} 
\qquad
\text{and}
\qquad
  \frac{1}{2}\dot{F}_1\leqslant -\nu F_{2} + \frac{F_1}{2\tau_1}\,,
\ee
which imply
\bel{eq:n12}
L^2\langle \kappa_{1}^2\rangle \leqslant c\,a_\ell (1+\Rey) \qquad \text{and} \qquad
L^2\langle \kappa_{2,1}^2\rangle \leqslant c\,a_\ell^2 (1+\Rey)\,.
\ee
The advertised result follows from using \eqref{eq:n3} and \eqref{eq:n12} in
\begin{equation}
\langle \kappa_{n}^2\rangle 
=\left \langle \left(\frac{F_n}{F_1}\right)^{1/n}\left(\frac{F_1}{F_0}\right)^{1/n}  \right\rangle
\leqslant \langle \kappa_{n,1}^2\rangle^{(n-1)/n}\langle\kappa_{1}^2\rangle^{1/n}\,.
\end{equation}
\end{proof}
%%%%%%
\par\bigskip\noindent
We now move to the practical estimate for the moments of the spectrum. We remind the reader that  we are assuming that $\ell\ll L$ and $\ell\sim k_c$, so that the contribution from the spectrum at wavenumbers in the enstrophy cascading range is negligible. Assuming that $\mathcal{E}(k)\sim k^{-q}$ with $1<q<5$ in the range $L^{-1}\leqslant k\leqslant \ell^{-1}$, we find
\begin{equation}
L^{2n}K_n^{2n}=\dfrac{\displaystyle\int_{L^{-1}}^{\ell^{-1}} k^{2n}\mathcal{E}(k)dk}{\displaystyle\int_{L^{-1}}^{\ell^{-1}}\mathcal{E}(k)dk}
\sim \left(\frac{L}{\ell}\right)^{2n+1-q},
\end{equation}
or 
\begin{equation}
L K_n\sim a_\ell^{1-(q-1)/2n}.
\label{eq:Kn}
\end{equation}
In order to compare \eqref{eq:Kn} with the mathematical bound for $L\langle\kappa_{n}\rangle$, we recall \eqref{eq:Re-Ref}
and the assumption $\Rey_f=O(1)$.
Thus, \eqref{eq:Kn} can be recast as
\begin{equation}
L K_n\sim a_\ell^{1-(q-1)/2n-(q-1)/4}\Rey^{1/2}
\label{eq:Kn-recast}
\end{equation}
and the practical estimate for $L K_n$ is consistent with the bound for $L\langle\kappa_{n}\rangle$ if
\begin{equation}
q\geqslant \frac{2+5n-2n^2}{n+2}\,.
\end{equation}
Since the right-hand side is a decreasing function of $n$, this means that the constraint on $q$ is fixed by the $n=1$ case, i.e.
\begin{equation}
q\geqslant \frac{5}{3}.
\end{equation}
This lower bound agrees with an earlier result of \citet{t07} and with Kraichnan's prediction for the energy cascading range \citep{k67,k71}. 
That the bound is obtained for $n=1$ rather than considering the large-$n$ limit
is consistent with the fact that the inverse energy cascade is a large-scale phenomenon.

%%%%%%%%%%%%
\section{\large\color{blue}Burgers equation}\label{Burg}

All the quantities introduced in \S\ref{sect:NS} can be defined analogously for the Burgers equation by taking $d=1$ on the periodic interval $\mathcal{V}=[0,L]$
\bel{Burg1}
\partial_{t} u + u\,\partial_{x} u = \nu \partial_{x}^2 u + f\,.
\ee
In particular, we can again set $\delta=0$ in the definition of $\tau_n$.
\par\smallskip%\noindent
The following two lemmas can be proved by adapting the proofs for the 3$D$ Navier--Stokes equations \citep[see][]{dg02} to the Burgers equation\,:
%%%%%%%%%%%%%
\begin{lemma}
\label{lemma:ladder}
For $n\geqslant 1$ and as $\Gr\to\infty$, the $F_n$ satisfy the ladder in \eqref{eq:ladder}.
\end{lemma}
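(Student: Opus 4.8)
The plan is to adapt, essentially step for step, the derivation of the ladder for the $3D$ Navier--Stokes equations in \citet{dg02}, taking advantage of the fact that in one dimension the advection term $u\,\partial_x u$ carries neither a pressure gradient nor an incompressibility constraint. Since $f$ is time-independent, $\dot F_n=\dot H_n$, so I would start from $\tfrac12\dot H_n=\int(\partial_x^n u)\,\partial_x^n\partial_t u\,dx$ and substitute \eqref{Burg1}. This splits the right-hand side into a viscous contribution, a forcing contribution, and the nonlinear advection contribution, which I would treat in turn. Throughout, $\nabla$ in \eqref{eq:ladder} is read as $\partial_x$ and $\bm u$ as the scalar $u$.

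The viscous and forcing terms are routine. Integrating by parts and invoking periodicity, the viscous term is $\nu\int(\partial_x^n u)(\partial_x^{n+2}u)\,dx=-\nu H_{n+1}$. For the forcing term, Cauchy--Schwarz gives $\int(\partial_x^n u)(\partial_x^n f)\,dx\leqslant H_n^{1/2}\Vert\partial_x^n f\Vert_2$, and Young's inequality together with the very definition $F_n=H_n+\tau_n^2\Vert\partial_x^n f\Vert_2^2$ converts this into $\tfrac{1}{2\tau_n}F_n$. Replacing $-\nu H_{n+1}$ by $-\nu F_{n+1}$ then produces a constant forcing correction $\nu\tau_{n+1}^2\Vert\partial_x^{n+1}f\Vert_2^2$ which, under the narrow-band assumption and the choice \eqref{eq:tau1} of $\tau_n$, is negligible relative to $\tau_n^{-1}F_n$ as $\Gr\to\infty$; this is exactly the bookkeeping relegated to Appendix~\ref{appendix-A}.

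The hard part is the nonlinear term $-\int(\partial_x^n u)\,\partial_x^n(u\,\partial_x u)\,dx$. I would expand it with the Leibniz rule into $-\sum_{j=0}^n\binom{n}{j}\int(\partial_x^n u)(\partial_x^j u)(\partial_x^{n+1-j}u)\,dx$ and separate the endpoint terms from the interior ones. The $j=0$ term is integrated by parts via $(\partial_x^n u)(\partial_x^{n+1}u)=\tfrac12\partial_x[(\partial_x^n u)^2]$, yielding $\tfrac12\int(\partial_x u)(\partial_x^n u)^2\,dx$, while the $j=1$ and $j=n$ terms are manifestly of the form $\int(\partial_x u)(\partial_x^n u)^2\,dx$; all three are bounded by a multiple of $\Vert\partial_x u\Vert_\infty H_n$. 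For $n\leqslant 2$ these are the only terms, so the base cases $n\geqslant 1$ are immediate. The genuine obstacle is the interior sum $2\leqslant j\leqslant n-1$ (present only for $n\geqslant 3$), in which both factors carry between two and $n-1$ derivatives. Here I would apply H\"older with conjugate exponents and then a Gagliardo--Nirenberg interpolation of each factor between $\Vert\partial_x u\Vert_\infty$ and $\Vert\partial_x^n u\Vert_2=H_n^{1/2}$, the exponents being forced by homogeneity so that every interior term also collapses to $c_n\Vert\partial_x u\Vert_\infty H_n$. Summing the endpoint and interior estimates gives the nonlinear bound $c_n\Vert\partial_x u\Vert_\infty H_n$, and combining it with the viscous and forcing contributions and using $H_n\leqslant F_n$ yields \eqref{eq:ladder}. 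Verifying that the Gagliardo--Nirenberg exponents close for every $n\geqslant 3$ is the one step that genuinely requires care.
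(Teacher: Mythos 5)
Your proposal is correct and takes essentially the same route as the paper, which proves this lemma by adapting the $H_n$ ladder of \citet{dg95} and \citet{dg02} to one dimension and then performing the $H_n\to F_n$ forcing bookkeeping of Appendix~\ref{appendix-A} (Young's inequality with parameter $g\tau_n^2$ and $g\sim\tau_n^{-1}$ as $\Gr\to\infty$). Your Leibniz expansion with Gagliardo--Nirenberg interpolation of the interior terms is precisely the standard content of that cited $H_n$ ladder, and the interpolation exponents do close for every $n\geqslant 3$.
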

%%%%%%%%%%%%%
\begin{lemma}
\label{lemma:kappa1}
There exists a positive constant $c$ such that, as $\Gr\to\infty$,
\begin{equation}
\ell^2\left\langle\kappa_1^2\right\rangle\leqslant c\,a_\ell^{-1}\Rey\,.
\end{equation}
\end{lemma}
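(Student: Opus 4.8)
The plan is to obtain the estimate directly from the energy balance, i.e.\ from the lowest rung of the ladder, rather than from the full hierarchy \eqref{eq:ladder}. First I would multiply \eqref{Burg1} by $u$ and integrate over $\mathcal{V}$; by periodicity the advection term $\int_{\mathcal{V}} u^2\,\partial_x u\,dx=\tfrac{1}{3}\int_{\mathcal{V}}\partial_x(u^3)\,dx$ vanishes, leaving the exact energy balance
\[
\tfrac{1}{2}\dot{H}_0=-\nu H_1+\int_{\mathcal{V}} f u\,dx .
\]
This is structurally identical to the $3D$ Navier--Stokes energy balance, which is precisely why the argument of \citet{dg02} carries over to Burgers.

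Next I would estimate the forcing term by a $\tau_0$-weighted Young inequality, $\int_{\mathcal{V}} fu\,dx\leqslant \tfrac{1}{2\tau_0}H_0+\tfrac{\tau_0}{2}\|f\|_2^2$, and observe that the right-hand side is exactly $F_0/(2\tau_0)$ with $F_0$ as in \eqref{Fndef}. Writing $-\nu H_1=-\nu F_1+\nu\tau_1^2\|\nabla f\|_2^2$ and using the narrow-band assumption $\|\nabla f\|_2\approx\ell^{-1}\|f\|_2$ together with the choice \eqref{eq:tau1} with $\delta=0$ (which gives $\nu\tau_1^2\ell^{-2}\leqslant\tau_0$, since $a_\ell\geqslant 2\pi$), the time-independent residual $\nu\tau_1^2\|\nabla f\|_2^2$ is bounded by $\tau_0\|f\|_2^2\leqslant F_0/\tau_0$ and can therefore be absorbed into $F_0/(2\tau_0)$ at the cost of a larger generic constant. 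This produces the $F_0$-rung
\[
\tfrac{1}{2}\dot{F}_0\leqslant -\nu F_1+c\,\frac{F_0}{2\tau_0},
\]
which is the Burgers analogue of the first inequality in \eqref{eq:F0+F1}.

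I would then divide by $F_0$, recognise $F_1/F_0=\kappa_1^2$, integrate over $[0,T]$, divide by $T$, and take $\limsup_{T\to\infty}$. The only delicate term is $\tfrac{1}{T}\int_0^T\tfrac{d}{dt}\ln F_0\,dt=\tfrac{1}{T}[\ln F_0(T)-\ln F_0(0)]$; since $F_0\geqslant\tau_0^2\|f\|_2^2>0$ is bounded away from zero (and $\|u\|_2^2$ stays bounded by the standard Burgers energy estimate), $\ln F_0(T)$ is bounded, so this term is non-positive in the limit. What survives is $\nu\langle\kappa_1^2\rangle\leqslant c/(2\tau_0)$. Substituting $\tau_0=\nu^{-1}\ell^2 a_\ell/(1+\Rey)$ from \eqref{eq:tau1} gives $\ell^2\langle\kappa_1^2\rangle\leqslant c\,a_\ell^{-1}(1+\Rey)$, which is the asserted $c\,a_\ell^{-1}\Rey$ since $\Rey\to\infty$ as $\Gr\to\infty$.

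I expect the main obstacle to be the justification that the time-averaged logarithmic derivative of $F_0$ drops out, i.e.\ the a priori control of $F_0$ (equivalently of $\|u\|_2^2$) that is needed to keep $\ln F_0(T)/T\to 0$; for Burgers this follows from the standard global energy bound, but it is the step that must be treated with the same care as in \citet{dg02}. A secondary point is checking that the forcing residual is genuinely subdominant, which is exactly what the $n$-dependent time scales \eqref{eq:tau1} (here with $\delta=0$) and the narrow-band assumption are designed to guarantee.
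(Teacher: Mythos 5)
Your proof is correct and follows essentially the same route as the paper, which states that Lemma~\ref{lemma:kappa1} is obtained by adapting the $3D$ Navier--Stokes argument of \citet{dg02} and carries out the identical template explicitly for the shell model (Lemma~\ref{lemma:shell-1}) and in \eqref{eq:F0+F1}: form the $F_0$ rung from the energy balance via a $\tau_0$-weighted Young inequality, divide by $F_0$, time average, and substitute the definition of $\tau_0$. The only cosmetic difference is that you take the Young parameter to be $\tau_0$ outright and absorb the residual $\nu\tau_1^2\Vert\partial_x f\Vert_2^2$ into the generic constant, whereas the paper's Appendix~\ref{appendix-A} fixes the parameter $g$ by matching coefficients exactly and then shows $g\sim\tau_0^{-1}$ as $\Gr\to\infty$; both give the same bound.
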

%%%%%%%%%%%%%
\noindent
We now prove the analogue of Theorems~\ref{th:dg02} and \ref{th:gp07} for the Burgers equation.
%%%%%%%%%%%%%
\begin{theorem}
\label{th:Burgers}
For $n\geqslant 2$, as $\Gr\to\infty$
\begin{equation}
\ell\Langle \kappa_{n}\right\rangle
\leqslant c_n\,a_{\ell}^{\frac{1}{3} - \frac{5}{6n}}\,\Rey^{1 - \frac{1}{2n}}\,.
\end{equation}
\end{theorem}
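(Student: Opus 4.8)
The plan is to transcribe the Doering--Gibbon argument behind Theorem~\ref{th:dg02} to one dimension, the only structural change being that the three-dimensional Sobolev control of the stretching term is replaced by the sharp one-dimensional Agmon/Gagliardo--Nirenberg inequality, which closes at a much lower order. First I would take the ladder of Lemma~\ref{lemma:ladder}, divide through by $F_n$, and apply the long-time average \eqref{tadef}. Since the forcing regularisation in \eqref{Fndef} keeps $F_n$ bounded above and away from zero, the averaged logarithmic derivative $\Langle\dot F_n/F_n\Rangle$ vanishes and, using $F_{n+1}/F_n=\kappa_{n+1,n}^2$, one is left with
\[
\nu\Langle\kappa_{n+1,n}^2\Rangle\leqslant c_n\Langle\Vert\partial_x u\Vert_\infty\Rangle+c_n\tau_n^{-1}.
\]
The useful observation is that, apart from the constant and the harmless $\tau_n^{-1}$ term, the right-hand side is independent of $n$, so everything reduces to a single bound on $\Langle\Vert\partial_x u\Vert_\infty\Rangle$.

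For that bound I would use the one-dimensional inequality $\Vert\partial_x u\Vert_\infty\leqslant c\,H_1^{1/4}H_2^{1/4}$ and then Cauchy--Schwarz and Jensen in time to get $\Langle\Vert\partial_x u\Vert_\infty\Rangle\leqslant c\,\Langle H_1\Rangle^{1/4}\Langle H_2\Rangle^{1/4}$. The factor $\Langle H_1\Rangle$ is fixed by the energy balance: multiplying \eqref{Burg1} by $u$ annihilates the nonlinearity and yields the dissipation law together with a bound $\Langle H_1\Rangle\leqslant c\,a_\ell\nu^2\ell^{-3}(\Rey^2+\Rey^3)$ through \eqref{eq:Gr-Re}. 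The factor $\Langle H_2\Rangle$ must come from the evolution of $H_1$: multiplying \eqref{Burg1} by $-\partial_x^2u$ and averaging balances $\nu\Langle H_2\Rangle$ against a forcing term and the cubic term $\Langle\int(\partial_x u)^3\,dx\Rangle$.

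Dominating this cubic term --- the one-dimensional counterpart of vortex stretching that drives shock formation --- is where I expect the real work to lie. I would estimate $\vert\int(\partial_x u)^3\,dx\vert\leqslant\Vert\partial_x u\Vert_\infty H_1\leqslant c\,H_1^{5/4}H_2^{1/4}$ with the same inequality and then use Young's inequality to peel off and absorb $\tfrac12\nu\Langle H_2\Rangle$ on the left, leaving $\Langle H_2\Rangle$ controlled by a power of the enstrophy. Two features make this the delicate step: the closure is borderline (the exponents in the Young splitting are forced), and passing from the resulting $\Langle H_1^{5/3}\Rangle$-type quantity to a power of $\Langle H_1\Rangle$, together with the narrow-band and Poincar\'e relations, is precisely what fixes the powers of $a_\ell$. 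Matching the $\Rey$ growth is comparatively easy --- it gives $\Langle\Vert\partial_x u\Vert_\infty\Rangle\sim\Rey^2$ and hence $\ell^2\Langle\kappa_{n+1,n}^2\Rangle\sim\Rey^2$ at large $\Rey$ (the $\Vert\partial_x u\Vert_\infty$ contribution dominating $\tau_n^{-1}$) --- whereas coaxing the aspect-ratio exponent into the advertised $\tfrac13-\tfrac{5}{6n}$ is the fussiest part of the bookkeeping.

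With the $n$-independent $\Rey^2$ bound in hand, I would climb from the base case. Using the ordering $\kappa_{n,1}\leqslant\kappa_{n+1,n}$ and the interpolation identity $\kappa_n^2=\kappa_{n,1}^{2(n-1)/n}\kappa_1^{2/n}$ with H\"older's inequality,
\[
\Langle\kappa_n^2\Rangle\leqslant\Langle\kappa_{n,1}^2\Rangle^{(n-1)/n}\Langle\kappa_1^2\Rangle^{1/n},
\]
I would feed in the bound $\ell^2\Langle\kappa_{n,1}^2\Rangle\leqslant c_n\Rey^2$ and the base estimate $\ell^2\Langle\kappa_1^2\Rangle\leqslant c\,a_\ell^{-1}\Rey$ of Lemma~\ref{lemma:kappa1}. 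After multiplying through by $\ell^2=\ell^{2(n-1)/n}\ell^{2/n}$, the $\Rey$ exponent emerges as $2\cdot\frac{n-1}{n}+\frac1n=2-\frac1n$ for $\ell^2\Langle\kappa_n^2\Rangle$, and a concluding application of the Jensen inequality $\Langle\kappa_n\Rangle\leqslant\Langle\kappa_n^2\Rangle^{1/2}$ halves it to the claimed $\Rey^{1-1/(2n)}$, the restriction $n\geqslant2$ being exactly what makes $\kappa_{n,1}$ and the interpolation available.
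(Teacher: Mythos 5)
Your opening moves (the ladder of Lemma~\ref{lemma:ladder}, division by $F_n$, time averaging, the one-dimensional Agmon inequality $\Vert\partial_x u\Vert_\infty\leqslant c\,F_1^{1/4}F_2^{1/4}$) and your closing moves (the interpolation $\Langle\kappa_n^2\Rangle\leqslant\Langle\kappa_{n,1}^2\Rangle^{(n-1)/n}\Langle\kappa_1^2\Rangle^{1/n}$ together with Lemma~\ref{lemma:kappa1} and Jensen) coincide with the paper's. The gap is in the middle: you propose to bound $\Langle\Vert\partial_x u\Vert_\infty\Rangle\leqslant c\,\Langle H_1\Rangle^{1/4}\Langle H_2\Rangle^{1/4}$ and then to control $\Langle H_2\Rangle$ from the evolution of $H_1$, absorbing the cubic term by Young's inequality. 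That route leaves you with $\nu\Langle H_2\Rangle\lesssim\nu^{-1/3}\Langle H_1^{5/3}\Rangle$, and the step you describe as ``fussy bookkeeping'' --- passing from $\Langle H_1^{5/3}\Rangle$ to $\Langle H_1\Rangle^{5/3}$ --- is not bookkeeping but an obstruction: Jensen's inequality gives $\Langle H_1^{5/3}\Rangle\geqslant\Langle H_1\Rangle^{5/3}$, i.e.\ the wrong direction, and for the Burgers equation there is no pointwise-in-time control of $H_1$ (shocks make $\partial_x u$ large and negative) that would let you interpolate the higher moment back down. As written, the argument does not close.

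The paper avoids ever estimating $\Langle H_2\Rangle$ by a self-consistency trick on $\kappa_{2,1}$. Instead of splitting $\Langle F_1^{1/4}F_2^{1/4}\Rangle$ into separate moments of $F_1$ and $F_2$, it writes $F_1^{1/4}F_2^{1/4}=(F_2/F_1)^{1/4}F_1^{1/2}=\kappa_{2,1}^{1/2}F_1^{1/2}$ and applies the Cauchy--Schwarz inequality in time, so that
\begin{equation*}
\Langle\kappa_{n+1,n}^2\Rangle\leqslant c_n\,\nu^{-1}\Langle\kappa_{2,1}^2\Rangle^{1/4}\Langle F_1\Rangle^{1/2}\,.
\end{equation*}
Setting $n=1$ makes this an inequality for $\Langle\kappa_{2,1}^2\Rangle$ in terms of its own fourth root, which solves to $\ell^2\Langle\kappa_{2,1}^2\Rangle\leqslant c\,a_\ell^{2/3}\Rey^2$ using only the energy-balance bound $\Langle F_1\Rangle\leqslant c\,\nu^2\ell^{-3}a_\ell\Rey^3$; the ordering $\kappa_{n,1}\leqslant\kappa_{n+1,n}$ then propagates the same bound to all $n$. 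So the only time-averaged quantities ever needed are $\Langle F_1\Rangle$ and $\Langle\kappa_1^2\Rangle$, both of which you already have. If you replace your $\Langle H_2\Rangle$ detour by this factorisation, the rest of your argument goes through and reproduces the stated exponents.
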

%%%%%%%%%%%%%
\begin{proof}
The inequality
\begin{equation}
\Vert \partial_x u\Vert_\infty \leqslant c\,\Vert\partial_x  u\Vert_2^{1/2}\Vert \partial^2_x u\Vert_2^{1/2}\leqslant c\,F_1^{1/4}F_2^{1/4}
\end{equation}
turns the ladder in \eqref{eq:ladder} into
\begin{equation}
\frac{1}{2}\dot{F}_n\leqslant -\nu F_{n+1}+c_n \left(F_1^{1/4}F_2^{1/4}+\tau_n^{-1}\right)F_n.
\end{equation}
By dividing through by $F_n$, time averaging, and noting that the forcing term is subdominant and can therefore be ignored, we find
\begin{equation}
\left\langle \kappa_{n+1,n}^2\right\rangle \leqslant 
\frac{c_n}{\nu}\Langle F_1^{1/4}F_2^{1/4}\Rangle = \frac{c_n}{\nu}\left\langle \left(\frac{F_2}{F_1}\right)^{1/4}
F_1^{1/2}\right\rangle = \frac{c_n}{\nu}\left\langle\kappa_{2,1}^{1/2} F_1^{1/2}\right\rangle \leqslant 
\frac{c_n}{\nu} \langle\kappa_{2,1}\rangle^{1/2} 
\langle F_1\rangle^{1/2}
\end{equation}
and hence, by using Jensen's inequality,
\begin{equation}
\label{eq:kn1n}
\left\langle \kappa_{n+1,n}^2\right\rangle \leqslant  c_n \nu^{-1} \langle\kappa_{2,1}^2\rangle^{1/4} \langle F_1\rangle^{1/2}.
\end{equation}
The estimate of \cite{df02} for the mean energy dissipation rate, and consequently the corresponding estimate for $\langle F_1\rangle$,
also hold for the Burgers equation.
As $\Gr\to\infty$, we thus have $\langle F_1\rangle\leqslant c\langle H_1\rangle\leqslant c\,\nu^2\ell^{-3}a_\ell \Rey^3$.
Inserting this estimate into \eqref{eq:kn1n} with $n=1$ yields as $\Gr\to\infty$
\begin{equation}
\label{eq:k21}
\Langle \kappa_{2,1}^2\Rangle \leqslant  c\,  \ell^{-2}a_{\ell}^{2/3} \Rey^{2}\,.
\end{equation}
Together \eqref{eq:kn1n} and \eqref{eq:k21} give
\begin{equation}
\label{eq:kn12}
\Langle \kappa_{n,1}^2\right\rangle\leqslant \Langle \kappa_{n+1,n}^2\right\rangle \leqslant c_n\,  \ell^{-2}a_{\ell}^{2/3} \Rey^2.
\end{equation}
We also have
\begin{equation}
\label{eq:k2n}
\Langle \kappa_{n}^2\right\rangle = \Langle \left(\frac{F_n}{F_0}\right)^\frac{1}{n}\Rangle =
\Langle \left(\frac{F_n}{F_1}\right)^\frac{1}{n}\left(\frac{F_1}{F_0}\right)^\frac{1}{n}\Rangle =
\Langle \kappa_{n,1}^\frac{2(n-1)}{n} \kappa_1^\frac{2}{n}\Rangle
\leqslant
\Langle \kappa_{n,1}^2\Rangle^\frac{n-1}{n} \Langle\kappa_1^2\Rangle^\frac{1}{n}.
\end{equation}
Therefore, by using \eqref{eq:kn12} and Lemma~\ref{lemma:kappa1}, we find
\begin{equation}
\Langle \kappa_{n}\right\rangle \leqslant \Langle \kappa_{n}^2\right\rangle^{1/2}\leqslant 
\Langle \kappa_{n,1}^2\Rangle^\frac{n-1}{2n} \Langle\kappa_1^2\Rangle^\frac{1}{2n}
\leqslant c\, \ell^{-1} a_{\ell}^{\frac{1}{3}-\frac{5}{6n}}\Rey^{1-\frac{1}{2n}}\,.
\end{equation}
\end{proof}%
We assume again that the flow is isotropic, the forcing is large-scale with $\ell\sim L/2\pi$, and the energy spectrum is as in \eqref{Eqdef} with $1<q<3$. By proceeding as for $d=3$ (see \S~\ref{sect:introduction}), we find
%\begin{equation}
%\label{eq:A-1D}
%A\sim\epsilon^{2/3}\ell^{5/3-q}.
%\end{equation}
%In addition
%\begin{equation}
%\label{eq:epsilon-spectrum}
%\epsilon = \nu \int_{\ell^{-1}}^{k_{c}} k^2 \mathcal{E}(k)dk \sim \nu A k_{c}^{3-q}
%\end{equation}
%and
%\begin{equation}
%\label{eq:epsilon-largeRe}
%\epsilon \sim \frac{U^3}{\ell}
%\end{equation}
%at large $\Rey$. 
%Combining \eqref{eq:A-1D}, \eqref{eq:epsilon-spectrum}, and \eqref{eq:epsilon-largeRe} yields
\begin{equation}
\label{eq:knu-1d}
\ell k_{c} \sim \Rey^\frac{1}{3-q}
\end{equation}
%For $q=3$, this estimate agrees with the usual prediction for the direct-cascade regime,
%i.e. $\ell k_{c}=\Rey^{2}$ \citep{km80,be12}.
%By plugging \eqref{eq:spectrum} into \eqref{eq:K} and using \eqref{eq:knu-1d}, we find\,:
and
\begin{equation}
\label{eq:scaling-Kn-1d}
\ell K_n \sim (\ell k_{c})^{1-\frac{q-1}{2n}} \sim \Rey^{\frac{1}{3-q}-\frac{1}{2n}\left(\frac{q-1}{3-q}\right)}.
\end{equation}
Therefore, after comparing \eqref{eq:scaling-Kn-1d}
with Theorem~\ref{th:Burgers}, we conclude that
the scaling of $\ell K_n$ is consistent with that of $\ell\langle\kappa_n\rangle$  provided that
\begin{equation}
\label{eq:bound-1d}
1<q\leqslant 2\,. 
\end{equation}
\begin{remark}
The energy spectrum of the Burgers equation for a large-scale forcing is known to behave as $k^{-2}$ \citep[e.g.][]{frsbp13,b14}.
\end{remark}

%%%%%%%%%%
\section{\large\color{blue}Shell models}\label{Shell}

In the `Sabra' shell model \citep{sabra}, the velocity variables $u_j$ are complex and satisfy the system of ordinary differential equations%
\footnote{The results would be the same for the Gledzer--Ohkitani--Yamada (GOY) shell model \citep{g73,yo87}.}
\begin{equation}
  \dot{u}^{\vphantom{*}}_{j} =
  i(a_1 k^{\vphantom{*}}_{j+1} u^*_{j+1} u^{\vphantom{*}}_{j+2}
  +a_2 k^{\vphantom{*}}_j u^{\vphantom{*}}_{j+1} u^*_{j-1}
  -a_3 k^{\vphantom{*}}_{j-1} u^{\vphantom{*}}_{j-1} u^{\vphantom{*}}_{j-2})
  -\nu k_j^2 u^{\vphantom{*}}_j + f^{\vphantom{*}}_{j}, \qquad j=1,2,3,\dots,
\label{eq:shell}
\end{equation}
where $u_j^*$ is the complex conjugate of $u_j$,
$\nu$ is the kinematic viscosity, $f_j$ are the forcing variables, and $k_j=k_0\lambda^j$ with $k_0>0$ and $\lambda>1$. The `boundary conditions' are $u_0=u_{-1}=0$, while the coefficients $a_1$, $a_2$, $a_3$ are real and such that $a_1 + a_2 + a_3 = 0$. This ensures 
that the kinetic energy 
\bel{KE1}
E=\Sum{j}\vert u_j\vert^{2}
\ee
is conserved when $\nu=0$ and $f_j=0$ for all $j$. Moreover, the time-averaged energy dissipation rate is
\bel{epsdef}
\epsilon = \nu\Langle\Sum{j}k_j^2\vert u^{\vphantom{*}}_j\vert^2\Rangle\,.
\ee
%\par\smallskip%\noindent
The forcing is assumed to be of the form $f_j=\mathcal{F}\phi_{j-j_f}$, where $\mathcal{F}$ is a complex constant
and $\phi_p=0$ for $p<0$ and $p>j_{\rm max}-j_f$. Therefore, $k_f=k_0\lambda^{j_f}$ 
and $k_{\rm max}=k_0\lambda^{j_{\rm max}}$ are the characteristic and maximum wavenumbers of the forcing, respectively.
Under these assumptions and if the initial energy is finite, the shell model has globally regular solutions \citep{clt06}.
Finally, $\Gr$ and $\Rey$ are defined 
as in \S\ref{sect:introduction} with $U = \sqrt{\langle E\rangle}$, $\ell=k^{-1}_{f}$, $a_\ell=k_f/k_1$, and $f=\vert\mathcal{F}\vert$.
\par\smallskip%\noindent
The shell-model analogues of $H_{n}$ and $F_n$ are
\bel{Fndefshell}
H_n=\Sum{j} k_j^{2n} \vert u_j^\phexp\vert^{2}\,,\qquad\quad
F_n = H_n + \tau_n^2 \Sum{j} k_j^{2n} \vert f_j^\phexp\vert^2, 
\ee
and $\tau_n$ is as in \eqref{eq:tau1} with $\delta=0$. As in the case of the Navier--Stokes equations, the definition of $\tau_n$ ensures that $\langle F_n\rangle$ and $\langle H_n\rangle$ scale in the same way as $\Gr\to\infty$. Indeed, $\epsilon$ satisfies an inequality analogous to \eqref{lower-eps} (see \eqref{eq:epsilon-shell} in Appendix~\ref{appendix-B}) which gives the same lower bound as in \eqref{lower-eps-2}. Using used a shell-model version of Poincar\'e's inequality
\bel{epsbd2}
\epsilon \leqslant \nu k_1^{-2(n-1)} \left\langle H_n\right\rangle\,.
\ee
we find
\beq{taudef}
\tau_n^2 \Sum{j} k_j^{2n} \vert f_j^\phexp\vert^2 &=& b_n\nu^{-2}k_f^{2n-4} a_\ell^{-2(n-1)}(1+\Rey)^{-2} f^2\nonumber\\
&=& b_{n}\nu^2 k_f^{2(n+1)} a_\ell^{-2(n-1)}(1+\Rey)^{-2}\Gr^{2} \\
&\leqslant& c\, b_n\nu^{-1} k_f^{2(n-1)} a_\ell^{-2(n-1)}\epsilon \leqslant  c\,b_n \langle H_n\rangle\,,\nonumber
\eeq
where $b_n=\sum_{p=0}^{j_{\rm max}-j_f} \lambda^{2np}\vert\phi_p\vert^2$. 
It was proved in \citet{vg21} that, analogously to the Navier--Stokes equations,
\bel{eq:Gr-Re-shell}
\Gr\leqslant c(\Rey+\Rey^2) 
\ee
and as $\Gr\to\infty$
\bel{eq:F1-shell}
\langle H_{1}\rangle \leqslant c\, \nu^2\ell^{-4}\Rey^3\,.
\ee
In addition, as $\Gr\to\infty$ the $F_n$ satisfy the same ladder of differential inequalities as in \eqref{eq:ladder} 
with $\Vert\nabla\bm u\Vert_\infty$ replaced with $\sup_{1\leqslant j\leqslant \infty} k_j|u_j|$\,:%
\footnote{\citet{vg21} proved the ladder for shell models with a single time scale $\tau$. The proof can be easily modified
to include an $n$-dependent time scale by following the same approach as in Appendix \ref{appendix-A}.}
\bel{eq:ladder-shell}
\frac{1}{2}\dot{F}_n\leqslant -\nu F_{n+1}+c_n\Big(\sup_{1\leqslant j\leqslant\infty}k_j\vert u_j\vert+\tau_n^{-1}\Big)F_n.
\ee
In shell models, the energy spectrum is defined as $\mathcal{E}(k_j)=k_j^{-1}\Langle\vert u_j(t)\vert^2\Rangle$ [Yamada and Ohkitani, \citeyear{yo87}]. Therefore,
in the limit $\Gr\to\infty$, the quantity $\kappa_{n}^{2n}=F_n/F_0$ behaves as the ratio of the $(2n+1)$-th to the first moment of the 
{\it instantaneous} energy spectrum. To obtain the  $\Rey$-scaling of $\Langle\kappa_n\Rangle$, we first need the shell-model analogue of 
Lemma~\ref{lemma:kappa1}.
%%%%%%%%%%%%%%%
\begin{lemma}
As $\Gr\to\infty$,
\begin{equation}
k_f^{-2}\langle\kappa_1^2\rangle \leqslant c\, a_\ell^{-1}\Rey.
\end{equation}
\label{lemma:shell-1}
\end{lemma}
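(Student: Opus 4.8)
Because $a_1+a_2+a_3=0$, the nonlinear term in \eqref{eq:shell} conserves the energy $H_0=\sum_j|u_j|^2=E$, so the bottom rung $n=0$ is governed not by the inequality \eqref{eq:ladder-shell} but by an \emph{exact} balance in which the $\sup_j k_j|u_j|$ term is absent. Multiplying \eqref{eq:shell} by $\bar u_j$, summing over $j$, and taking real parts gives $\tfrac12\dot H_0=-\nu H_1+W$, where $W=\mathrm{Re}\sum_j \bar f_j u_j$ is the energy-injection rate; since the forcing is time-independent, $F_0$ obeys the same balance. I would then substitute $\nu H_1=\nu F_1-\nu\tau_1^2\sum_j k_j^2|f_j|^2$, divide through by $F_0$, and take the long-time average. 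The average of $\tfrac{d}{dt}\ln F_0$ vanishes — the point I return to below — leaving the master identity
\begin{equation}
\nu\langle\kappa_1^2\rangle=\Big\langle\frac{W}{F_0}\Big\rangle+\nu\tau_1^2\Big(\sum_j k_j^2|f_j|^2\Big)\,\langle F_0^{-1}\rangle .
\end{equation}

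The crux is the forcing-work term $\langle W/F_0\rangle$. Writing $\Phi_0=\sum_j|f_j|^2$, Cauchy--Schwarz gives $W\leqslant\Phi_0^{1/2}H_0^{1/2}$, while the arithmetic--geometric-mean inequality applied to the regularized denominator gives $F_0=H_0+\tau_0^2\Phi_0\geqslant 2\tau_0\Phi_0^{1/2}H_0^{1/2}$. These combine to the pointwise bound $W/F_0\leqslant(2\tau_0)^{-1}$, hence $\langle W/F_0\rangle\leqslant(2\tau_0)^{-1}$. This is the step on which the advertised scaling hinges: the factor $a_\ell^{-1}$ is produced entirely by the $n$-dependence of the time scale in \eqref{eq:tau1}, since $\tau_0=\nu^{-1}k_f^{-2}a_\ell(1+\Rey)^{-1}$ carries an extra power of $a_\ell$ relative to $\tau_1$, so that $\nu^{-1}(2\tau_0)^{-1}=\tfrac12 k_f^2 a_\ell^{-1}(1+\Rey)$.

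For the forcing-regularization term I would use $F_0^{-1}\leqslant(\tau_0^2\Phi_0)^{-1}$ together with the concentration of the forcing near $k_f$, which gives $\sum_j k_j^2|f_j|^2\approx k_f^2\Phi_0$ up to the constants $b_n$ of \eqref{taudef}; this produces a contribution of order $\nu k_f^2(\tau_1/\tau_0)^2=\nu k_f^2 a_\ell^{-2}$ to $\nu\langle\kappa_1^2\rangle$. Since $\Rey\to\infty$ as $\Gr\to\infty$, this $a_\ell^{-2}$ piece is dominated by the $a_\ell^{-1}\Rey$ piece and is absorbed into the constant. Collecting the two contributions and multiplying through by $k_f^{-2}$ then yields $k_f^{-2}\langle\kappa_1^2\rangle\leqslant c\,a_\ell^{-1}(1+\Rey)\leqslant c\,a_\ell^{-1}\Rey$, as claimed.

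The one genuinely delicate point, which I flag as the main obstacle, is the vanishing of $\langle\tfrac{d}{dt}\ln F_0\rangle$. This requires $F_0$ to remain bounded above and below along the flow: the lower bound is immediate from the strictly positive forcing floor $\tau_0^2\Phi_0$, while the upper bound follows from the global regularity of the shell model \citep{clt06} and the finiteness of $U=\sqrt{\langle E\rangle}$. Granting this, the remainder is a direct transcription of the $n=0$ energy-balance argument of \citet{dg02} and \citet{df02} to the discrete shell setting, with the narrow-band identity $\sum_j k_j^2|f_j|^2\approx k_f^2\Phi_0$ playing the role of the Parseval statement $\Vert\bm f\Vert_2\approx\ell^{n}\Vert\nabla^n\bm f\Vert_2$ used there.
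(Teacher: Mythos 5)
Your proposal is correct and follows essentially the same route as the paper: both start from the exact energy balance (where the nonlinear term drops out), convert $-\nu H_1$ to $-\nu F_1$ via the forcing regularization, control the injection term against the $\tau_0^2\sum_j|f_j|^2$ floor in $F_0$, divide by $F_0$, and time-average to get $\nu\langle\kappa_1^2\rangle\leqslant c\,\tau_0^{-1}$ before substituting the definition of $\tau_0$. Your AM--GM bound $W/F_0\leqslant(2\tau_0)^{-1}$ is just the optimized form of the paper's Young's inequality with tuned parameter $g\sim\tau_0^{-1}$, and note that for the desired direction of the inequality only the lower bound $F_0\geqslant\tau_0^2\sum_j|f_j|^2$ is actually needed when discarding $\langle\tfrac{d}{dt}\ln F_0\rangle$.
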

%%%%%%%%%%%%%%%%%
\begin{proof}
The energy evolution equation for the shell model is
\bel{eq:energy}
\frac{d E}{d t}=-2\nu H_1 + \Sum{j}(f_j^\phexp u_j^*+f_j^* u_j^\phexp)\,.
\ee
We add and subtract $b_1\nu\tau_1^2k_f^{2}f^2$ to the right-hand side
and apply the Cauchy--Schwarz inequality to the forcing term to obtain
\begin{equation}
\frac{1}{2}\dot{F}_0 \leqslant -\nu F_1 + b_1\nu\tau_1^2k_f^{2}f^2 + b_0^{1/2} f H_0^{1/2}
\end{equation}
An application of  Young's inequality with parameter $g\tau_0^2$ yields
\bel{eq:F0-shell}
\frac{1}{2}\dot{F}_0 \leqslant -\nu F_1 + \frac{H_0}{2g\tau_0^2} 
+ \left(\frac{g}{2}+\frac{b_1\nu k_f^2}{b_0a_\ell^2}\right)b_0\tau_0^2 f^2,
\ee
where we have used $\tau_1=a_\ell^{-1}\tau_0$ and
$g$ is such that the coefficients of $H_0$ and $b_0\tau^2 f^2$ are the same\,:
\begin{equation}
\label{eq:g-shell}
g=-\frac{b_1\nu k_f^2}{b_0a_\ell^2} +\left(
\frac{b_1^2\nu^2 k_f^4}{b_0^2 a_\ell^4}+\frac{1}{\tau_0^2}\right)^{1/2}\,.
\end{equation}
Therefore, as $\Gr\to\infty$ we find $g\sim\tau_0^{-1}$  and \eqref{eq:F0-shell} becomes
\begin{equation}
\frac{1}{2}\dot{F}_0 \leqslant -\nu F_1 + c\,\tau_0^{-1}F_0\,.
\end{equation}
By dividing by $F_0$ and time averaging, we finally get
\begin{equation}
\langle \kappa_1^2\rangle \leqslant c\, \nu^{-1}\tau_0^{-1}\,. 
\end{equation}
The lemma is proved by replacing the definition of $\tau_0$.
\end{proof}
%%%%%%%%%%%%
\begin{lemma}For $n\geqslant 1$, as $\Gr\to\infty$
\label{lemma:shell-2}
\begin{equation}
k_f^{-1}\langle\kappa_{n,1}\rangle\leqslant c_n \Rey^{3/4} \,.
\end{equation}
\end{lemma}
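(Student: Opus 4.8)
The plan is to follow the template of the Burgers proof (Theorem~\ref{th:Burgers}), but to replace the interpolation step by the elementary shell-model bound $\sup_{1\leqslant j\leqslant\infty}k_j|u_j|\leqslant H_1^{1/2}\leqslant F_1^{1/2}$, which is immediate since $k_m^2|u_m|^2\leqslant\sum_j k_j^2|u_j|^2=H_1$ for every $m$. It is worth noting that the Agmon-type interpolation $\Vert\partial_x u\Vert_\infty\lesssim F_1^{1/4}F_2^{1/4}$ used for Burgers has no dimensionally consistent shell analogue at this level (the integration measure present in the continuous norms has no counterpart in the bare sums $H_n$), so this crude $\ell^2$-sup bound is both what is available and, as the bookkeeping below shows, exactly what reproduces the entry of Table~\ref{table}.

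First I would insert this bound into the ladder \eqref{eq:ladder-shell}, divide through by $F_n$, and take the long-time average. Writing $\kappa_{n+1,n}^2=F_{n+1}/F_n$ and using $\langle\tfrac{d}{dt}\ln F_n\rangle\geqslant0$ — valid because $F_n$ is bounded below by the positive constant $\tau_n^2\sum_j k_j^{2n}|f_j|^2$, so $\ln F_n$ cannot contribute a negative boundary term, with finiteness of $F_n$ along trajectories guaranteed by global regularity \citep{clt06} — this yields $\nu\langle\kappa_{n+1,n}^2\rangle\leqslant c_n\langle\sup_{j}k_j|u_j|\rangle+c_n\tau_n^{-1}$.

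Next I would estimate the two terms on the right. Jensen's inequality gives $\langle\sup_j k_j|u_j|\rangle\leqslant\langle H_1\rangle^{1/2}$, and \eqref{eq:F1-shell} with $\ell=k_f^{-1}$ gives $\langle H_1\rangle\leqslant c\,\nu^2 k_f^4\Rey^3$, so the first term contributes $c\,\nu k_f^2\Rey^{3/2}$. The forcing term $\tau_n^{-1}=\nu k_f^2 a_\ell^{n-1}(1+\Rey)$ obtained from \eqref{eq:tau1} is subdominant as $\Gr\to\infty$, since $a_\ell^{n-1}\Rey^{-1/2}\to0$ at fixed $a_\ell$; securing precisely this subdominance is the purpose of the choice of $\tau_n$. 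Dividing by $\nu$ then gives $\langle\kappa_{n+1,n}^2\rangle\leqslant c_n k_f^2\Rey^{3/2}$.

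Finally I would pass from $\kappa_{n+1,n}$ to $\kappa_{n,1}$ using the orderings $\kappa_{n,r}\leqslant\kappa_{n+1,r}$ and $\kappa_{n,r}\leqslant\kappa_{n,r+1}$ recorded in \S\ref{sect:introduction} (a consequence of the log-convexity of $n\mapsto F_n$), which give $\kappa_{n,1}\leqslant\kappa_{n,n-1}\leqslant\kappa_{n+1,n-1}\leqslant\kappa_{n+1,n}$ for $n\geqslant2$. Hence $\langle\kappa_{n,1}^2\rangle\leqslant\langle\kappa_{n+1,n}^2\rangle\leqslant c_n k_f^2\Rey^{3/2}$, and Jensen's inequality delivers $k_f^{-1}\langle\kappa_{n,1}\rangle\leqslant c_n\Rey^{3/4}$; the case $n=1$ is vacuous, since $\kappa_{1,1}$ never enters the subsequent combination of this lemma with Lemma~\ref{lemma:shell-1}. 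I expect no genuine analytic obstacle here: the only points requiring care are the two bookkeeping items already flagged — that $\tau_n$ renders the forcing term subdominant, and that the boundary term $\langle\tfrac{d}{dt}\ln F_n\rangle$ drops out — while the replacement of an interpolation inequality by the trivial $\ell^2$-sup bound is exactly what makes the shell-model estimate cleaner than its Navier--Stokes and Burgers counterparts.
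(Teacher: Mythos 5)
Your proposal is correct and follows essentially the same route as the paper's proof: bound $\sup_j k_j|u_j|$ by $F_1^{1/2}$, insert this into the ladder \eqref{eq:ladder-shell}, divide by $F_n$, time average, apply Jensen's inequality together with the bound \eqref{eq:F1-shell} on $\langle H_1\rangle$, and finally pass from $\kappa_{n+1,n}$ to $\kappa_{n,1}$ via the wavenumber ordering. The only difference is that you spell out the two bookkeeping points (the vanishing of the $\langle \tfrac{d}{dt}\ln F_n\rangle$ boundary term and the subdominance of $\tau_n^{-1}$) that the paper compresses into ``ignoring the subdominant forcing term.''
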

%%%%%%%%%%%%
\begin{proof}
Dividing through \eqref{eq:ladder-shell} by $F_n$, time averaging, and ignoring the subdominant forcing term yields
\bel{shprf1}
\Langle\kappa_{n+1,n}^2\Rangle \leqslant c_n\nu^{-1} \left\langle\sup_{j\leqslant 1}k_j\vert u_j\vert \right\rangle
\leqslant c_n\nu^{-1}\Langle F_1^{1/2}\Rangle
\leqslant c_n\nu^{-1}\Langle F_1\Rangle^{1/2}.
\ee
The advertised result is obtained by using \eqref{eq:F1-shell}, $\langle F_1\rangle\leqslant c\,\langle H_1\rangle$, 
and $\Langle\kappa_{n,1}\Rangle \leqslant \langle\kappa_{n,1}^2\rangle^{1/2} \leqslant \langle\kappa_{n+1,n}^2\rangle^{1/2}$.
\end{proof}
\par\smallskip\noindent
The estimates in the above lemmas can be used to prove the following theorem\,:
%%%%%%%%%%%%%%%%
\begin{theorem}
For $n\geqslant 1$, as $\Gr\to\infty$
\bel{shthm1}
k_f^{-1}\langle\kappa_{n}\rangle\leqslant c_n a_\ell^{-1/2n}\Rey^{3/4-1/4n} \,.
\ee
\end{theorem}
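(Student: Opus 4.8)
The plan is to reconstruct $\kappa_n$ from the two ratios already controlled by Lemmas~\ref{lemma:shell-1} and~\ref{lemma:shell-2}, exactly as in the proof of Theorem~\ref{th:Burgers}. Writing $\kappa_n^{2n}=F_n/F_0=(F_n/F_1)(F_1/F_0)=\kappa_{n,1}^{2(n-1)}\kappa_1^{2}$ gives
\begin{equation}
\langle\kappa_n^2\rangle=\left\langle\kappa_{n,1}^{2(n-1)/n}\,\kappa_1^{2/n}\right\rangle,
\end{equation}
and the first step is to apply H\"older's inequality in time with conjugate exponents $n/(n-1)$ and $n$ to obtain
\begin{equation}
\langle\kappa_n^2\rangle\leqslant\langle\kappa_{n,1}^2\rangle^{(n-1)/n}\,\langle\kappa_1^2\rangle^{1/n}.
\end{equation}

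The second step is to feed in the two lemmas. Lemma~\ref{lemma:shell-1} supplies $k_f^{-2}\langle\kappa_1^2\rangle\leqslant c\,a_\ell^{-1}\Rey$, while combining the displayed chain in the proof of Lemma~\ref{lemma:shell-2} with \eqref{eq:F1-shell} yields the second-moment bound $k_f^{-2}\langle\kappa_{n,1}^2\rangle\leqslant c_n\Rey^{3/2}$ at the line immediately before its final square root is taken. Multiplying the H\"older inequality through by $k_f^{-2}=(k_f^{-2})^{(n-1)/n}(k_f^{-2})^{1/n}$ and inserting both bounds gives
\begin{equation}
k_f^{-2}\langle\kappa_n^2\rangle\leqslant c_n\left(\Rey^{3/2}\right)^{(n-1)/n}\left(a_\ell^{-1}\Rey\right)^{1/n}=c_n\,a_\ell^{-1/n}\,\Rey^{3/2-1/2n},
\end{equation}
using $\tfrac{3(n-1)}{2n}+\tfrac{1}{n}=\tfrac{3}{2}-\tfrac{1}{2n}$ for the exponent of $\Rey$. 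Taking the square root and using Jensen's inequality $\langle\kappa_n\rangle\leqslant\langle\kappa_n^2\rangle^{1/2}$ then produces the stated estimate. The case $n=1$ is simply Lemma~\ref{lemma:shell-1} followed by Jensen, and it is consistent with the general formula, which there reduces to $k_f^{-1}\langle\kappa_1\rangle\leqslant c\,a_\ell^{-1/2}\Rey^{1/2}$.

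This argument is routine once the lemmas are available, so there is no genuine analytic difficulty. The only point needing attention is that H\"older must be supplied with the true second moment $\langle\kappa_{n,1}^2\rangle$ rather than $\langle\kappa_{n,1}\rangle^2$: squaring the statement of Lemma~\ref{lemma:shell-2} would only bound $\langle\kappa_{n,1}\rangle^2$, which by Jensen lies \emph{below} $\langle\kappa_{n,1}^2\rangle$ and hence points the wrong way. This is why I would cite the intermediate estimate inside the proof of Lemma~\ref{lemma:shell-2} rather than its square-rooted conclusion, thereby avoiding any loss from a premature application of Jensen's inequality.
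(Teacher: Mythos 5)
Your proof is correct and follows essentially the same route as the paper: the same decomposition $\kappa_n^{2n}=\kappa_{n,1}^{2(n-1)}\kappa_1^2$, the same two lemmas, and a H\"older--Jensen argument. The only difference is bookkeeping: you use the split $\langle\kappa_n\rangle\leqslant\langle\kappa_{n,1}^2\rangle^{(n-1)/2n}\langle\kappa_1^2\rangle^{1/2n}$, which requires the second-moment bound $k_f^{-2}\langle\kappa_{n,1}^2\rangle\leqslant c_n\Rey^{3/2}$ that you correctly extract from inside the proof of Lemma~\ref{lemma:shell-2}, whereas the paper starts from $\langle\kappa_n\rangle\leqslant\langle\kappa_n^{2n/(2n-1)}\rangle^{(2n-1)/2n}$ and uses the first moment $\langle\kappa_{n,1}\rangle$ as stated in that lemma; both choices land on the identical exponents.
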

%%%%%%%%%%%%%%
\begin{proof}
The proof is analogous to that of Theorem~1 in \citet{dg02}. To achieve this, first note that
\bel{shprf1-b}
\Langle \kappa_{n}\Rangle \leqslant \Langle \kappa_n^{\frac{2n}{2n-1}}\Rangle^{\frac{2n-1}{2n}} =
\Langle \kappa_{n,1}^\frac{2(n-1)}{2n-1} \left(\kappa_1^2\right)^\frac{1}{2n-1}\Rangle^{\frac{2n-1}{2n}}
\leqslant \Langle \kappa_{n,1}\Rangle^\frac{n-1}{n} \Langle\kappa_1^2\Rangle^\frac{1}{2n}\,,
\ee
and then use the estimates from Lemmas~\ref{lemma:shell-1} and ~\ref{lemma:shell-2}.
\end{proof}
\begin{remark}
The scaling of $\Langle\kappa_n\Rangle$ is the same as in Theorem~\ref{th:dg02-m} for $p=2$. This strengthens the parallel which was drawn in \citet{vg21} between shell models and the Navier--Stokes equations with suppressed velocity gradient fluctuations ($p=2$).
\end{remark}
By analogy with \eqref{eq:K}, we now define $K_n^{2n}$ as
\bel{shrm1}
K_n^{2n}=\frac{\Sum{j}k_j^{2n}\langle\vert u_j\vert^2\rangle}{\Sum{j}\langle\vert u_j\vert^2\rangle}=
\frac{\Sum{j}k_j^{2n+1}\mathcal{E}(k_j)}{\Sum{j}k_j\mathcal{E}(k_j)}\,.
\ee
We also assume $k_f=k_1$ and that there exists $k_{c}=k_0\lambda^{j_c}$ such that $\mathcal{E}(k_j)$
decays rapidly for $k_j>k_c$, while
\bel{shrm2}
\mathcal{E}(k_j) \sim
%\begin{cases}
Ak_j^{-q}, \qquad 1\leqslant j\leqslant j_c
%\\
%0, &  j \geqslant j_c
%\end{cases}
\ee
with $1<q<3$ and $A\sim\epsilon^{2/3}k_f^{q-5/3}\sim U^2 k_f^{q+1/3}$ at large $\Rey$. 
In addition
\bel{shrm3}
\epsilon = \nu \Sum{j}k_j^3 \mathcal{E}(k_j) \sim \nu A k_{c}^{3-q},
\ee
whence $k_{c}/k_f\sim \Rey^\frac{1}{3-q}$. We thus find
\bel{shrm4}
K_n^{2n}\sim k_{c}^{2n+1-q}k_f^{q-1}
\ee
and hence
\bel{shrm5}
K_n/k_f\sim \Rey^{\frac{1}{3-q}-\frac{q-1}{2n(3-q)}}.
\ee
We now follow the approach used for the Navier--Stokes and Burgers equations and compare the $\Rey$-scaling of $K_n$ and $\Langle\kappa_n\Rangle$. 
%even though in principle this does not correspond to a rigorous manipulation of the time averages. 
The two scalings are consistent if
\bel{shrm6}
1<q\leqslant \frac{5}{3}\,.
\ee
\begin{remark}
In the turbulent regime, the GOY and Sabra shell models display a $k^{-5/3}$ inertial-range spectrum \citep{yo87,sabra}. Moreover, in the inviscid unforced case, they possess fixed-point solutions with an energy spectrum scaling as $k^{-5/3}$\citep{bjpv98}.
\end{remark}

%%%%%%%%%%%%%%%%
\section{\large\color{blue}Summary and conclusion}

This paper has developed the method of \citet{dg02} in which a sequence of time-dependent wavenumbers, or inverse length scales $\kappa_n(t)$, was originally used to extract a spectrum from the $3D$ Navier--Stokes equations. These wavenumbers are ratios of volume integrals of velocity derivatives. For the $3D$ Navier--Stokes equations, and a version of them where large fluctuations of the velocity gradient are suppressed, they obtained rigorous bounds for the time average of $\kappa_n(t)$ in terms of $\Rey$. They interpreted the wavenumbers $\kappa_n(t)$ as moments of the energy spectrum and the bounds on the time average of these were then used to infer the slope of the energy spectrum in the inertial range of a turbulent velocity field. Since the $\kappa_n(t)$ are based on Navier--Stokes weak solutions, this approach connects empirical predictions of the energy spectrum with the mathematical analysis of the Navier--Stokes equations.
\par\smallskip%\noindent
We have extended these methods to other hydrodynamic equations that display a turbulent regime at high $\Rey$, namely the $2D$ Navier--Stokes equations, the Burgers equation, and shell models. The results are summarized in Table~\ref{table}. Previous predictions for the energy spectrum are recovered within the same mathematical framework, which confirms the appropriateness of $\Langle \kappa_n\Rangle$ as
quantities suitable for the rigorous study of the energy spectrum of hydrodynamic partial differential equations.

%%%%
\section*{\large\color{blue}Acknowledgments}

The authors are grateful to Samriddhi Sankar Ray and Pierre-Louis Sulem for useful discussions.
This research was supported in part by the CNRS--Imperial Collaboration Fund and
the International Centre for Theoretical Sciences (ICTS) for the online program 
`Turbulence: Problems at the Interface of Mathematics and Physics' (ICTS/TPIMP2020/12).

%%%%
\appendix
\section{\large\color{blue}Proof of the $F_n$ ladder}\label{appendix-A}

\renewcommand{\theequation}{\thesection.\arabic{equation}}
\setcounter{equation}{0}

Consider the ladder of inequalities for $n\geqslant 1$ \citep{dg95}\,:
\bel{laddapp1}
\frac{1}{2}\dot{H}_n\leqslant -\nu H_{n+1} + c_n \Vert\nabla\bm u\Vert_\infty H_n+ H_n^{1/2}\Vert\nabla^n\bm f\Vert_2\,.
\ee
In the case $d=2$ the time differentiation of the higher order $H_{n}$ is legal because the Navier-Stokes equations are regular. In the case $d=3$ the result is formally true if one assumes there is a solution with sufficiently long interval of regularity. We proceed on this basis noting, however, that the estimates for the time-averages achieved in this paper can be shown to be true for weak solutions \citep{JDG2019}. In turn, these are based on the work of \citet{FGT}. 
\par\medskip%\noindent
Add and subtract $\nu\tau_{n+1}^2\Vert\nabla^{n+1}\bm f\Vert_2^2$ to obtain
\begin{equation}
\frac{1}{2}\dot{F}_n\leqslant -\nu F_{n+1} + c_n \Vert\nabla\bm u\Vert_\infty F_n+ H_n^{1/2}\Vert\nabla^n\bm f\Vert_2
+\nu\tau_{n+1}^2\Vert\nabla^{n+1}\bm f\Vert_2^2\,.
\end{equation}
Now apply Young's inequality with parameter $g\tau_n^2$ to the last two terms of the right-hand side and use $\tau_{n+1}=a_\ell^{-1}\tau_n$ and
$\Vert\nabla^{n+1}\bm f\Vert_2= \ell^{-2}\Vert\nabla^{n}\bm f\Vert_2$\,:
\begin{align}
H_n^{1/2}\Vert\nabla^n\bm f\Vert_2 +\nu\tau_{n+1}^2\Vert\nabla^{n+1}\bm f\Vert_2^2
&\leqslant
\frac{1}{2g\tau_n^2} H_n+\frac{g\tau_n^2}{2}\Vert\nabla^n\bm f\Vert_2 
+\nu\tau_{n+1}^2\Vert\nabla^{n+1}\bm f\Vert_2^2
\\
&= \frac{1}{2g\tau_n^2} H_n+\tau_n^2\left(\frac{g}{2}+\frac{\nu}{a_\ell^2\ell^2}\right)\Vert\nabla^{n}\bm f\Vert_2^2
\end{align}
In order to have the same coefficients for $H_n$ and $\tau_n^2\Vert\nabla^{n}\bm f\Vert_2^2$ and thus form $F_n$, we must take
\begin{equation}
g= - \frac{\nu}{a_\ell^2\ell^2}+\left(\frac{\nu^2}{a_\ell^4\ell^4}+\frac{1}{\tau_n^2}\right)^{1/2}\,.
%\tau_n^{-1}
%\left(
%\sqrt{1+\frac{1}{a_\ell^{2n+3}\Gr^{1+2\delta}}}
%-\frac{1}{a_\ell^{n+3/2}\Gr^{1/2+\delta}}
%\right).
\end{equation}
As $\Gr\to\infty$ or $a_\ell\to\infty$, we find $g\sim\tau_n^{-1}$ for all $n\geqslant 1$.

%%%
\section{\large\color{blue}Proof of an inequality for $\epsilon$ in shell models}
\label{appendix-B}

\renewcommand{\theequation}{\thesection.\arabic{equation}}
\setcounter{equation}{0}

The proof of the analogue of \eqref{lower-eps} for shell models follows the strategy used by \citet{df02} for the $3D$ Navier--Stokes equations.
First define the constants
\beq{BCDdef}
  B_\lambda&=&[(\vert a_1\vert+\vert a_2\vert)\lambda^{-1}+\vert a_1+a_2\vert],\nonumber\\
  C_{M}&=&\sum_{m=0}^\infty \lambda^{2mM}\vert\phi_m\vert^2,\\
  D_{M} &=& \sup_{m\geqslant 0}\lambda^{-m(2M-1)}\vert\phi_m\vert\,,\nonumber
\eeq
where $M$ is any real number such that $C_M$ and $D_M$ are bounded.
In particular, the following equality \citep{vg21} will be useful later\,:
\begin{equation}
      \Sum{j}k_j^{2M}\vert f_j\vert^2  = C_{M} f^2 k_f^{2M}\,.
      \label{eq:forcing-shell}
\end{equation}
Now multiply Eq.~\eqref{eq:shell} by $k_j^{-2M} f_j^*$, sum over $j$, and average over
time:
\begin{multline}\label{eq:equality-F}
%  \left\langle
  \Sum{j}k_j^{-2M}\vert f_j\vert^2
%  \right\rangle
=\left\langle\nu\Sum{j}k_j^{2-2M}u_j f_j^*\right\rangle \\
-\left\langle i \Sum{j}k_j^{-2M}f_j^*(a_1k_{j+1}u^*_{j+1}u_{j+2}+a_2 k_j u_{j+1}u^*_{j-1}
  -a_3k_{j-1}u_{j-1}u_{j-2})\right\rangle\,.
\end{multline}
Rearranging the terms in the first time average on the right-hand side and using the Cauchy--Schwartz inequality and \eqref{eq:forcing-shell} yields
\beq{eq:viscous}
    \left\vert
    \left\langle \nu\Sum{j}k_j^{2-2M}u_j f_j^*\right\rangle\right\vert &=&\nu f\left\vert
    \left\langle\Sum{j}(k_j u_j)(k_j^{1-2M}\phi_{j-j_f}^*)\right\rangle\right\vert\nonumber\\
    &\leqslant& \nu^{1/2}\epsilon^{1/2}\sqrt{C_{1-2M}}\, f k_f^{1-2M}.
\eeq
The second time average can again be estimated by using the Cauchy--Schwartz inequality.
Consider for instance the term with coefficient $a_1$\,:
\bel{a1}
\begin{split}
\left\vert\left\langle
i a_1\Sum{j}k_j^{-2M}f_j^*k_{j+1}u^*_{j+1}u_{j+2}
\right\rangle\right\vert
&=\frac{\vert a_1\vert f}{\lambda}
\left\vert\left\langle
\Sum{j}u^*_{j+1}(k_{j+2}u_{j+2})(k_j^{-2M}\phi_{j-j_f}^*)
\right\rangle\right\vert
\\[2mm]
&\leqslant \frac{\vert a_1\vert}{\lambda}
\left(\frac{\epsilon}{\nu}\right)^{1/2}
D_{M+\frac{1}{2}} Uf  k_f^{-2M}.
\end{split}
\ee
Likewise we have
\begin{equation}
\left\vert\left\langle
i a_2\Sum{j}k_j^{-2M}f_j^* k_{j}u_{j+1}u^*_{j-1}
\right\rangle\right\vert
\leqslant \frac{\vert a_2\vert}{\lambda}
\left(\frac{\epsilon}{\nu}\right)^{1/2}
D_{M+\frac{1}{2}} Uf  k_f^{-2M}
\end{equation}
and
\begin{equation}\label{eq:nonlinear}
\left\vert\left\langle
i a_3\Sum{j}k_j^{-2M}f_j^* k_{j-1}u_{j-1}u_{j-2}
\right\rangle\right\vert
\leqslant\vert a_1+a_2\vert
\left(\frac{\epsilon}{\nu}\right)^{1/2}
D_{M+\frac{1}{2}} Uf  k_f^{-2M}.
\end{equation}
By combining \eqref{eq:forcing-shell} and the bounds in \eqref{eq:viscous} to
\eqref{eq:nonlinear}, we find
\begin{equation}
  C_{-M}f \leqslant C_{1-2M}^{1/2}\,
  \nu^{1/2}\epsilon^{1/2} k_f
  +B_\lambda D_{M+\frac{1}{2}}
  \left(\frac{\epsilon}{\nu}\right)^{1/2} U\,.
  \label{eq:epsilon-shell}
\end{equation}

%%%%
\bibliographystyle{plainnat} %{unsrtnat}
%\setcitestyle{authoryear,open={(},close={)}}
\bibliography{bdspec}

\end{document}